\newtheorem{thm}{Theorem}[section]
\newtheorem{crl}[thm]{Corollary}
\newtheorem{lmm}[thm]{Lemma}
\newtheorem{prp}[thm]{Proposition}
\newtheorem{rem}{Remark}%%%
\newcommand{\DIS}{\displaystyle}
\def\Z{\mathbb Z}
\def\R{\mathbb R}
\def\F{\mathcal F}
\def\III{I\hspace{-1.2pt}I\hspace{-1.2pt}I}
\def\II{I\hspace{-1.2pt}I}
\def\x{\mathfrak{x}}
\def\p{\mathfrak{p}}
\title{\textbf{The Volterra lattice, Abel's equation of the first kind, and the SIR epidemic models}}
\author{Atsushi \textsc{Nobe}\\[5pt]
\normalsize{Faculty of Political Science and Economics, Waseda University,}\\
\normalsize{1-6-1 Nishiwaseda, Shinjuku, Tokyo 169-8050, Japan}\\
\normalsize{e-mail: \texttt{nobe@waseda.jp}}
}
\date{}
\begin{document}
%
% The text goes here.  
% Be sure to use the appropriate "theorem-like" environment as 
% is the following examples.  Never use plain TeX commands for these, as
% they will cause interference with the styles of other papers. 

\maketitle

%\tableofcontents      %optional
\begin{abstract}      %optional
The Volterra lattice, when imposing non-zero constant boundary values, admits the structure of a completely integrable Hamiltonian system if the system size is sufficiently small.
Such a Volterra lattice can be regarded as an epidemic model known as the SIR model with vaccination, which extends the celebrated SIR model to account for vaccination.
Upon the introduction of an appropriate variable transformation, the SIR model with vaccination reduces to an Abel equation of the first kind, which corresponds to an exact differential equation.
The equipotential curve of the exact differential equation is the Lambert curve. 
Thus, the general solution to the initial value problem of the SIR model with vaccination, or the Volterra lattice with constant boundary values, is implicitly provided by using the Lambert W function.
\end{abstract}

%%%%% SECTION %%%%%
\section{Introduction}\label{sec:intro}
%%%%% SECTION %%%%%

The Volterra lattice is a simultaneous system of infinitely many first-order differential equations that pertain to the nodes on a one-dimensional infinite lattice \cite{KM75-1, KM75-2, Moser75}.
By imposing an appropriate boundary condition, the Volterra lattice reduces to a completely integrable Hamiltonian flow on a finite-dimensional Poisson manifold. 
Significant boundary conditions that contribute to complete integrability include the periodic boundary and the open-end boundary. 
When the Volterra lattice imposes either the periodic or the open-end boundary, it admits a bi-Hamiltonian structure on the Poisson manifold, thereby providing a sufficient number of conserved quantities \cite{Bogoyavlenskij91, Bogoyavlenskij08, Suris03} .

In this article, we consider another boundary condition that contributes to the complete integrability of the Volterra lattice: the boundary nodes are assigned constant values, which are not necessarily zero.
Although such a Volterra lattice has a Poisson structure, and hence, is a Hamiltonian flow on a finite-dimensional Poisson manifold, it does not admit a bi-Hamiltonian structure.
Thus, the Volterra lattice with constant boundary values is, in general, not a completely integrable Hamiltonian system.
Nevertheless, if the system size is sufficiently small, the Volterra lattice exhibits complete integrability because a sufficient number of conserved quantities immediately follow from the Hamiltonian.
If the system size is two, the Volterra lattice with constant boundary values reduces to an integrable epidemic model called the SIR model with vaccination \cite{Hethcote74}, which is an extension of the SIR model \cite{KM27} under the influence of vaccination and is abbreviated to the SIRv model.
If the system size is three, the Volterra lattice with constant boundary values also exhibits the complete integrability; however, its significance as an epidemic model is unclear.

Upon the introduction of an appropriate variable transformation, the SIRv model is transformed into an exact differential equation via a first-order nonlinear differential equation of degree three called Abel's equation of the first kind \cite{HLM14, MR16}.
The exact differential equation thus obtained possesses the potential arising from the symplectic structure of the Volterra lattice.
Moreover, the invariant curve of the SIRv model, or the equipotential curve of the Abel equation, is the Lambert curve.
Hence, the general solution to the SIRv model, or the Volterra lattice with constant boundary values, is implicitly provided in terms of the Lamber W function \cite{CGHJK96}.
In addition, an integrable discretization of the SIRv model, possessing exactly the same conserved quantity as the continuous model, is achieved through a geometric construction utilizing the Lambert curve \cite{Nobe23}.

This article is organized as follows.
In \$\ref{sec:VL}, we introduce the Volterra lattice and briefly review its Poisson structure.
Then we show that the Volterra lattice with constant boundary values is a completely integrable Hamiltonian flow on the Poisson manifold if the system size is either two or three.
We moreover show that the two-dimensional Volterra lattice can be regarded as an epidemic model called the SIRv model.
In \$\ref{sec:AbelSIRv}, we investigate Abel's equation of the first kind and show that it reduces to an exact differential equation if some conditions are satisfied.
We also show that the Abel equation can be related with the SIRv model.
The equipotential curve of the exact differential equation is the Lambert curve, thereby the general solution to the initial value problems of the SIRv model, or the two-dimensional Volterra lattice with constant boundary values, is implicitly provided in terms of the Lambert W function.
We devote to concluding remarks in \$\ref{sec:concl}.

%%%%% SECTION %%%%%
\section{The Volterra lattice}\label{sec:VL}
%%%%% SECTION %%%%%
Let $p$ a natural number.
The following simultaneous system of inifinitely many first-order differential equations that pertain to the nodes on a one-dimensional infinite lattice is called the Volterra lattice or the Lotka-Volterra system \cite{KM75-1, KM75-2, Moser75, Bogoyavlenskij91,Bogoyavlenskij08}
\begin{align}
\dot a_i
=
a_i
\left(
\sum_{j=1}^pa_{i+j}
-
\sum_{j=1}^pa_{i-j}
\right)
\qquad
(i\in\Z),
\label{eq:LVp}
\end{align}
where $a_i=a_i(t)$ is a differentiable function in $t$ assigned to the $i$-th node, and ``$\ \dot{}\ $" denotes the derivative with respect to $t$.
Throughout this article, we assume $p=1$, which achieves the simplification of \eqref{eq:LVp}:
\begin{align}
\dot a_i
=
a_i
\left(
a_{i+1}
-
a_{i-1}
\right)
\qquad
(i\in\Z).
\label{eq:LV}
\end{align}

%%%%% SUBSECTION %%%%%
\subsection{Boundary conditions}\label{subsec:BC}
%%%%% SUBSECTION %%%%%

Let us consider the differentiable $L$-dimensional manifold
\begin{align*}
V=\R^{L}(a_1,a_2,\ldots,a_{L}),
\end{align*}
where $(a_1,a_2,\ldots,a_{L})$ stands for the local, and hence, the global coordinates.
The Volterra lattices equipped with the following two boundary conditions are known as completely integrable Hamiltonian systems on the phase space $V$ \cite{Bogoyavlenskij91,Bogoyavlenskij08,Suris03}.
\begin{enumerate}
\item[(I)] Periodic boundary ($a_i=a_{L+i}$).

\begin{align*}
\begin{tikzpicture}
\draw[thick](0,0)--(3,0);
\draw[dashed, thick](3,0)--(5,0);
\draw[thick](5,0)--(8,0);
\draw[thick](0,0)--(0,-1);
\draw[thick](0,-1)--(8,-1);
\draw[thick](8,-1)--(8,0);
\fill[black](0,0)circle(0.1)node[above]{$a_1=a_{L+1}$};
\fill[black](2,0)circle(0.1)node[above]{$a_2$};
\fill[black](6,0)circle(0.1)node[above]{$a_{L-1}$};
\fill[black](8,0)circle(0.1)node[above]{$a_L=a_0$};
\end{tikzpicture}
\end{align*}

\item[(\II)] Open-end boundary ($a_0=a_{L}=0$).

\begin{align*}
\begin{tikzpicture}
\draw[thick](-1.91,0)--(1,0);
\draw[dashed, thick](1,0)--(3,0);
\draw[thick](3,0)--(5.91,0);
\fill[black](0,0)circle(0.1)node[above]{$a_1$};
\fill[black](4,0)circle(0.1)node[above]{$a_{L-1}$};
\draw[thick](6,0)circle(0.1)node[above]{$a_{L}=0$};
\draw[thick](-2,0)circle(0.1)node[above]{$a_{0}=0$};
\end{tikzpicture}
\end{align*}

The exterior nodes $a_{-1},a_{-2},\ldots$ and $a_{L+1},a_{L+2},\ldots$ are not uniquely determined, even when the values of the inner nodes $a_0,a_1,\ldots,a_{L}$ are assigned.
Actually, since
\begin{align*}
0
&=
\dot a_0
=
a_0
\left(
a_{1}
-
a_{-1}
\right)
=
0\left(
a_{1}
-
a_{-1}
\right),
\\
0
&=
\dot a_{L}
=
a_{L}
\left(
a_{L+1}
-
a_{L-1}
\right)
=
0
\left(
a_{L+1}
-
a_{L-1}
\right)
\end{align*}
holds, the nodes $a_{-1}$ and $a_{L+1}$ are arbitrary, thereby we inductively arrive at the claim.
Since $a_0=a_L=0$ and the exterior nodes are arbitrary, the Volterra lattice with the open-end boundary is a Hamiltonian system on the phase space $V=\R^{L-1}$, making it a subsystem of the one with the periodic boundary (I) on $V=\R^L$. 
\end{enumerate}

In this article, we consider the third boundary condition with which the Volterra lattice \eqref{eq:LV} is completely integrable Hamiltonian system on $V=\R^M$ for certain system sizes $M$:
\begin{itemize}
\item[(\III)] Constant boundary ($a_0=\alpha$, $a_{M+1}=\beta$, $\alpha,\beta\in\R\backslash\{0\}$).

\begin{align*}
\begin{tikzpicture}
\draw[thick](-1.91,0)--(1,0);
\draw[dashed, thick](1,0)--(3,0);
\draw[thick](3,0)--(5.91,0);
\draw[dashed, thick](-3,0)--(-2.09,0);
\draw[dashed, thick](6.09,0)--(7,0);
\fill[black](0,0)circle(0.1)node[above]{$a_1$};
\fill[black](4,0)circle(0.1)node[above]{$a_{M}$};
\draw[thick](6,0)circle(0.1)node[above]{$a_{M+1}=\beta$};
\draw[thick](-2,0)circle(0.1)node[above]{$a_{0}=\alpha$};
\end{tikzpicture}
\end{align*}

Contrary to (\II), the exterior nodes $a_{-1},a_{-2},\ldots$ and $a_{M+2},a_{M+3},\ldots$ are uniquely determined as rational functions of the inner nodes $a_0,a_1,\ldots,a_{M+1}$ via \eqref{eq:LV}:
\begin{align*}
a_{i-1}
&=
a_{i+1}-\frac{\dot a_i}{a_i}
\qquad(i<0),
\\
a_{i+1}
&=
a_{i-1}+\frac{\dot a_i}{a_i}
\qquad(i>M+1).
\end{align*}
In particular, the nearest exterior nodes are reflective with respect to the boundary nodes $a_0$ and $a_{M+1}$,
\begin{align*}
a_{-1}=a_1
\quad
\mbox{and}
\quad
a_{M+2}=a_{M},
\end{align*}
since $a_0\neq0$, $a_{M+1}\neq0$ and $\dot a_0=\dot a_{M+1}=0$.
Hence, the system with the constant boundary is not a subsystem of the one with the periodic boundary (I).

\end{itemize}

Numerical examination implies that the Volterra lattice \eqref{eq:LV} with the constant boundary (\III) is, in general, not completely integrable.
Nevertheless, the Volterra lattice with the constant boundary (\III) exhibits the complete integrability if the system size $M$ is sufficiently small, since it has a Poisson structure on $V=\R^M$ for any $M$.

%%%%% SUBSECTION %%%%%
\subsection{Poisson structures}\label{subsec:PS}
%%%%% SUBSECTION %%%%%
Before reviewing the Poisson structure, we introduce the two-field form of the Volterra lattice.
Let $a_{2k}=x_k$ and $a_{2k-1}=y_k$.
Then the Volterra lattice \eqref{eq:LV} reduces to the system of first-order ODEs called the two-field form
\begin{align}
\dot x_k
&=
x_k
\left(
y_{k+1}
-
y_k
\right),
\label{eq:VL2ff1}\\
\dot y_k
&=
y_k
\left(
x_k
-
x_{k-1}
\right)
\qquad
\label{eq:VL2ff2}
\end{align}
for $k\in\Z$.

First remark the bi-Hamiltonian structure of the Volterra lattice (\ref{eq:VL2ff1}--\ref{eq:VL2ff2}) on the $2N$-dimensional phase space
\begin{align*}
V=\R^{2N}(x_1,\ldots,x_N,y_1,\ldots,y_N)
\end{align*}
with the periodic boundary (I).
We adopt the following convention: the Poisson brackets are defined by writing down all non-vanishing brackets between the coordinate functions.
We say that two Poisson brackets on $V$ are compatible if their arbitrary linear combination is also a Poisson bracket on $V$.

%//////////////////// THEOREM ////////////////////%
%//////////////////// THEOREM ////////////////////%
%//////////////////// THEOREM ////////////////////%
\begin{prp}[\cite{Suris03}]\label{prop:H0H1}\normalfont
Suppose the periodic boundary condition (I).
The relations
\begin{align*}
\left\{x_k,y_k\right\}_2&=x_ky_k,
\\
\left\{x_k,y_{k+1}\right\}_2&=-x_ky_{k+1}
\end{align*}
define a quadratic Poisson bracket $\{\cdot,\cdot\}_2$ on $V$.
The flow (\ref{eq:VL2ff1}--\ref{eq:VL2ff2}) is a Hamiltonian system on $(V,\{\cdot,\cdot\}_2)$ with the Hamiltonian function
\begin{align*}
H_1(x,y)
:=
\sum_{k=1}^{N}(x_k+y_k).
\end{align*}

Also the relations
\begin{align*}
\left\{x_k,y_k\right\}_3&=x_ky_k(x_k+y_k),
\\
\left\{x_k,y_{k+1}\right\}_3&=-x_ky_{k+1}(x_k+y_{k+1}),
\\
\left\{x_k,x_{k+1}\right\}_3&=-x_ky_{k+1}x_{k+1},
\\
\left\{y_k,y_{k+1}\right\}_3&=-y_kx_ky_{k+1}
\end{align*}
define a cubic Poisson bracket $\{\cdot,\cdot\}_3$ on $V$ compatible with $\{\cdot,\cdot\}_2$.
The flow (\ref{eq:VL2ff1}--\ref{eq:VL2ff2}) is a Hamiltonian system on $(V,\{\cdot,\cdot\}_3)$ with the Hamiltonian function
\begin{align*}
H_0(x,y)
&:=
\frac{1}{2}\sum_{k=1}^{N}\left(\log x_k+\log y_k\right).
\end{align*}
These Poisson structures are also valid for the open-end boundary (\II).
\qed
\end{prp}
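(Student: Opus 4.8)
The plan is to verify the three assertions of the proposition — that $\{\cdot,\cdot\}_2$ and $\{\cdot,\cdot\}_3$ are Poisson brackets, that they are compatible, and that the flow is Hamiltonian for the stated functions — by direct computation, after first returning to the single-index variables $a_i$, in which the structure becomes translation invariant. Writing $a_{2k}=x_k$ and $a_{2k-1}=y_k$, the given generator relations collapse to the uniform, site-independent formulas $\{a_i,a_{i+1}\}_2=-a_ia_{i+1}$ for the quadratic bracket and $\{a_i,a_{i+1}\}_3=-a_ia_{i+1}(a_i+a_{i+1})$, $\{a_i,a_{i+2}\}_3=-a_ia_{i+1}a_{i+2}$ for the cubic one, so that the structure matrices $\pi^{(2)}$ and $\pi^{(3)}$ are tridiagonal, respectively pentadiagonal, with banded entries. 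Each bracket is skew and a biderivation by the adopted convention, so only the Jacobi identity needs proof; by the Leibniz rule it suffices to verify, on triples of coordinate functions, the vanishing of the Jacobiator $J_{abc}=\sum_d\bigl(\pi_{ad}\,\partial_d\pi_{bc}+\pi_{bd}\,\partial_d\pi_{ca}+\pi_{cd}\,\partial_d\pi_{ab}\bigr)$ with $\partial_d=\partial/\partial a_d$.

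Because of the banded structure, $J_{abc}$ vanishes identically unless $a,b,c$ lie within a short window of consecutive sites; modulo translation (and reflection) this leaves a single triple type $\{i,i+1,i+2\}$ to check for $\{\cdot,\cdot\}_2$, and essentially the three types $\{i,i+1,i+2\}$, $\{i,i+1,i+3\}$, $\{i,i+2,i+4\}$ for $\{\cdot,\cdot\}_3$; in each case expanding the polynomials yields a cancellation. Compatibility then requires nothing further: the Jacobiator of $\{\cdot,\cdot\}_2+\lambda\{\cdot,\cdot\}_3$ is quadratic in the bivector, equal to $J^{(2)}+2\lambda\,J^{(2,3)}+\lambda^2 J^{(3)}$ where $J^{(2,3)}_{abc}=\sum_d\bigl(\pi^{(2)}_{ad}\partial_d\pi^{(3)}_{bc}+\pi^{(3)}_{ad}\partial_d\pi^{(2)}_{bc}\bigr)+\text{(cyclic)}$; since $J^{(2)}$ and $J^{(3)}$ already vanish, only the mixed identity $J^{(2,3)}\equiv0$ remains, to be checked on exactly the same short list of triples.

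It remains to identify the flow. With $H_1=\sum_i a_i$ one finds $\{H_1,a_i\}_2=\{a_{i-1},a_i\}_2+\{a_{i+1},a_i\}_2=-a_{i-1}a_i+a_ia_{i+1}=a_i(a_{i+1}-a_{i-1})$, which is precisely \eqref{eq:LV}; with $H_0=\tfrac12\sum_i\log a_i$, the derivation $\{H_0,a_i\}_3=\tfrac12\sum_j a_j^{-1}\{a_j,a_i\}_3$ receives contributions only from $j\in\{i\pm1,i\pm2\}$, the cubic numerators cancel the factors $a_j^{-1}$, and the four surviving terms again add up to $a_i(a_{i+1}-a_{i-1})$. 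Finally, the open-end boundary (\II), $a_0=a_L=0$, merely truncates the index range; every bracket relation and every identity above involves only a bounded window of sites, so they all restrict verbatim to the truncated lattice once the terms containing the vanished boundary variables are dropped. The main obstacle is not conceptual but organizational — arranging the finitely many overlapping-triple computations for $\{\cdot,\cdot\}_3$ and for $J^{(2,3)}$ so that the cancellations are transparent — together with the minor extra care needed in the periodic case at very small system size, where the identifications $a_i=a_{L+i}$ can make two bands of $\pi^{(3)}$ coincide and the polynomial identities must be re-examined directly (they persist, being identities, under the substitution).
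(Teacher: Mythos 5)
Your plan for the periodic case is sound, and it actually attempts more than the paper does: the paper attributes the bracket axioms and the compatibility to Suris's book and only \emph{confirms} the Hamiltonian-flow identities $\{H_1,x_k\}_2=\{H_0,x_k\}_3=x_k(y_{k+1}-y_k)$ (and the analogues for $y_k$) by the same bulk computation you perform in single-index form; your reduction to the translation-invariant relations $\{a_i,a_{i+1}\}_2=-a_ia_{i+1}$, $\{a_i,a_{i+1}\}_3=-a_ia_{i+1}(a_i+a_{i+1})$, $\{a_i,a_{i+2}\}_3=-a_ia_{i+1}a_{i+2}$ is correct, your enumeration of the Jacobiator triples is complete (up to translation/reflection), and the flow computations check out. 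Two caveats: the substantive cancellations (cubic Jacobi identity and the mixed identity $J^{(2,3)}\equiv0$) are asserted rather than carried out, so as written this is a verification plan; and for the quadratic bracket you could avoid the triple check entirely by passing to $u_i=\log a_i$, in which the bracket is constant, so Jacobi is immediate.

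The genuine gap is the last step, the claim that everything ``restricts verbatim'' to the open-end boundary once terms containing the vanished boundary variables are dropped. That locality argument does work for the bracket properties (Jacobi and compatibility), because for each surviving triple the nonvanishing Jacobiator terms involve only surviving variables. It fails, however, for the Hamiltonian-flow identity of the cubic structure at the boundary sites: in the bulk cancellation the contribution $\tfrac12 a_{i-1}^{-1}\{a_{i-1},a_i\}_3=-\tfrac12 a_i(a_{i-1}+a_i)$ supplies the term $-\tfrac12 a_i^2$ that cancels $+\tfrac12 a_i^2$ coming from $j=i+1$, and this contribution does \emph{not} vanish when $a_{i-1}=0$; in the truncated system it is simply absent (there is no variable $a_0$ and no $\log a_0$ in $H_0$). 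Concretely, at $i=1$ one gets
\begin{align*}
\{H_0,a_1\}_3
=
\tfrac12\left[a_2^{-1}\{a_2,a_1\}_3+a_3^{-1}\{a_3,a_1\}_3\right]
=
a_1a_2+\tfrac12 a_1^2
\neq
a_1a_2=\dot a_1 ,
\end{align*}
so the naive restriction of $(\{\cdot,\cdot\}_3,H_0)$ does not reproduce the open-end flow at the boundary. You therefore cannot dispose of the open-end assertion by the dropping argument: either restrict that part of your proof to the statement that the two brackets remain Poisson and compatible on the truncated lattice (which your argument does establish, and which is the reading consistent with the quadratic structure, where the dropped terms genuinely vanish at $a_0=a_L=0$), or supply the boundary-corrected cubic data from the cited literature before claiming the $H_0$-Hamiltonian formulation in the open-end case.
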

%//////////////////// THEOREM ////////////////////%
%//////////////////// THEOREM ////////////////////%
%//////////////////// THEOREM ////////////////////%

Let $\F(V)$ be the set of smooth real-valued functions on the $M$-dimensional manifold $V$.
For $F,G\in\F(V)$ and a Poisson bracket $\{\cdot,\cdot\}$ on $V$, we have
\begin{align*}
\{F,G\}
&=
\sum_{i,j=1}^{M}A_{ij}\frac{\partial F}{\partial a_i}\frac{\partial G}{\partial a_j},
\end{align*} 
where $A_{ij}=\{a_i,a_j\}$ forms the skew-symmetric $M\times M$ matrix as a coordinate representation of the Poisson tensor.

The assertion in Proposition \ref{prop:H0H1} can be confirmed as follows.
For the cubic Poisson bracket $\{\ ,\ \}_3$, we have
\begin{align*}
\left\{H_0,x_k\right\}_3
&=
\frac{1}{2}\left(
\left\{\log x_{k-1},x_k\right\}_3+\left\{\log x_{k+1},x_k\right\}_3+\left\{\log y_k,x_k\right\}_3+\left\{\log y_{k+1},x_k\right\}_3
\right)
\\
&=
\frac{1}{2}\left(
-\frac{x_{k-1}y_kx_k}{x_{k-1}}+\frac{x_ky_{k+1}x_{k+1}}{x_{k+1}}-\frac{x_ky_k(x_k+y_k)}{y_k}+\frac{x_ky_{k+1}(x_k+y_{k+1})}{y_{k+1}}
\right)
\\
&=
x_k\left(y_{k+1}-y_k\right).
\end{align*}
For the quadratic bracket $\{\ ,\ \}_2$, we also have
\begin{align*}
\left\{H_1,x_k\right\}_2
&=
\left\{y_k,x_k\right\}_2+\left\{y_{k+1},x_k\right\}_2
=
x_k\left(y_{k+1}-y_k\right).
\end{align*}
We similarly obtain 
\begin{align*}
\left\{H_0,y_k\right\}_3
&=
\left\{H_1,y_k\right\}_2
=
y_k(x_k-x_{k-1}).
\end{align*}
Thus (\ref{eq:VL2ff1}--\ref{eq:VL2ff2}) defines two compatible Hamiltonian flows on $V=\R^{2N}$:
\begin{align*}
\dot x_k&=\left\{H_1,x_k\right\}_2=\left\{H_0,x_k\right\}_3,
\\
\dot y_k&=\left\{H_1,y_k\right\}_2=\left\{H_0,y_k\right\}_3.
\end{align*}

Meanwhile, suppose the Volterra lattice (\ref{eq:VL2ff1}--\ref{eq:VL2ff2}) to have the constant boundary (\III).
If we consider the phase space
\begin{align*}
V=\R^{2N}(x_1,\ldots,x_N,y_1,\ldots,y_N)
\end{align*}
of $2N$-dimension then (\ref{eq:VL2ff1}--\ref{eq:VL2ff2}) reduces to
\begin{align}
\dot y_1&=y_1(x_1-\alpha),
\label{eq:VL2ffeven11}\\
\dot x_k
&=
x_k
\left(
y_{k+1}
-
y_k
\right)
&&
\mbox{for $k=1,2,\ldots,N-1$},&&
\label{eq:VL2ffeven12}\\
\dot y_k
&=
y_k
\left(
x_k
-
x_{k-1}
\right)
&&
\mbox{for $k=2,3,\ldots,N$},&&
\label{eq:VL2ffeven21}\\
\dot x_N
&=
x_N(\beta-y_N).
\label{eq:VL2ffeven22}
\end{align}
Whereas, if the phase space is of $(2N-1)$-dimension,
 \begin{align*}
V=\R^{2N-1}(x_1,\ldots,x_{N-1},y_1,\ldots,y_{N}),
\end{align*}
then
\begin{align}
\dot y_1&=y_1(x_1-\alpha),
\label{eq:VL2ffodd11}\\
\dot x_k
&=
x_k
\left(
y_{k+1}
-
y_k
\right)
&&
\mbox{for $k=1,2,\ldots,N-1$},&&
\label{eq:VL2ffodd12}\\
\dot y_k
&=
y_k
\left(
x_k
-
x_{k-1}
\right)
&&
\mbox{for $k=2,3,\ldots,N-1$},&&
\label{eq:VL2ffodd21}\\
\dot y_N
&=
y_N(\beta-x_{N-1}).
\label{eq:VL2ffodd22}
\end{align}

Let us consider the following function $H_{01}(x,y)$ in $x_k$ and $y_k$
\begin{align}
H_{01}(x,y)
&:=
\sum_{k=1}^{N}(x_k+y_k)
-
\alpha\sum_{k=1}^{N}\log x_k
-
\beta\sum_{k=1}^{N}\log y_k.
\label{eq:H1xy}
\end{align}
If $M=2N$ then the boundaries are $x_0=\alpha$ and $y_{N+1}=\beta$, and the derivative of $H_{01}$ with respect to $t$ always vanishes
\begin{align*}
\dot H_{01}(x,y)
=&
\sum_{k=1}^N\left(x_ky_{k+1}-x_{k-1}y_k\right)
-
\alpha\sum_{k=1}^{N}(y_{k+1}-y_k)
-
\beta\sum_{k=1}^{N}(x_k-x_{k-1})
\\
=&
\beta x_N-\alpha y_1-\alpha(\beta-y_1)-\beta(x_N-\alpha)
=
0.
\end{align*}
Therefore, $H_{01}$ is a conserved quantity of the system (\ref{eq:VL2ffeven11}--\ref{eq:VL2ffeven22}) for arbitrary $\alpha,\beta$.

If $M=2N-1$ then the boundaries are $x_0=\alpha$ and $x_N=\beta$, but $H_{01}$ is, in general, not a conserved quantity; the derivative of $H_{01}$ with respect to $t$ does not vanish:
\begin{align*}
\dot H_{01}(x,y)
=&
x_{N-1}y_N-\alpha y_1+y_N(\beta -x_{N-1})-\alpha (y_N-y_1)-\beta(\beta -\alpha )
\\
=&
(y_N-\beta)(\beta -\alpha ).
\end{align*}
However, this computation suggests that if $\alpha =\beta$ then $H_{01}$ is still a conserved quantity.
Moreover, it implies that $H_{01}$ is divided into two conserved quantities $G_1$ and $G_2$ as $H_{01}=G_1-\alpha G_2$ if $\alpha =\beta$, where
\begin{align*}
G_1(x,y)
&:=
\sum_{k=1}^{N}(x_k+y_k)
-
\alpha\sum_{k=1}^{N}\log x_k,
\\
G_2(x,y)
&:=
\sum_{k=1}^{N}\log y_k.
\end{align*}

Now, investigate the Poisson structure of the Volterra lattice (\ref{eq:VL2ff1}--\ref{eq:VL2ff2}) equipped with the constant boundary (\III), or (\ref{eq:VL2ffeven11}--\ref{eq:VL2ffeven22}) and  (\ref{eq:VL2ffodd11}--\ref{eq:VL2ffodd22}).
We easily see that the Hamiltonians $H_0$ and $H_1$ in Proposition \ref{prop:H0H1} are no longer conserved when the constant boundary condition (\III) is imposed.
However, we know that the function $H_{01}$ (see \eqref{eq:H1xy}) is still a conserved quantity of (\ref{eq:VL2ffeven11}--\ref{eq:VL2ffeven22}) for arbitrary $\alpha,\beta$, and is of (\ref{eq:VL2ffodd11}--\ref{eq:VL2ffodd22}) when $\alpha=\beta$.
We then obtain the following proposition concerning the Poisson structure of the Volterra lattice with the constant boundary (\III).

%//////////////////// THEOREM ////////////////////%
%//////////////////// THEOREM ////////////////////%
%//////////////////// THEOREM ////////////////////%
\begin{prp}\label{prop:H01}\normalfont
Suppose the constant boundary condition (\III).
If $V=\R^{2N}$, the flow  (\ref{eq:VL2ffeven11}--\ref{eq:VL2ffeven22}) is the Hamiltonian system on the Poisson manifold $(V,\{\cdot,\cdot\}_2)$ with the Hamiltonian function $H_{01}$ for any $\alpha,\beta$.
Similarly, if $V=\R^{2N-1}$ and $\alpha=\beta$, the flow (\ref{eq:VL2ffodd11}--\ref{eq:VL2ffodd22}) is also the Hamiltonian system on $(V,\{\cdot,\cdot\}_2)$ with $H_{01}$.
\end{prp}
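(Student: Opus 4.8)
The plan is to establish the two ingredients of the statement in turn. First I would check that the relations of Proposition~\ref{prop:H0H1} defining $\{\cdot,\cdot\}_2$ continue to define a bona fide Poisson bracket on the finite-dimensional manifold $V$ carrying the constant boundary (\III). Then I would compute the Hamiltonian vector field of $H_{01}$ (see \eqref{eq:H1xy}) with respect to $\{\cdot,\cdot\}_2$ and verify that it coincides with the flow (\ref{eq:VL2ffeven11}--\ref{eq:VL2ffeven22}) when $V=\R^{2N}$, and with (\ref{eq:VL2ffodd11}--\ref{eq:VL2ffodd22}) when $V=\R^{2N-1}$ and $\alpha=\beta$. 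Throughout, the boundary values $x_0=\alpha$, $y_{N+1}=\beta$ (resp.\ $x_0=\alpha$, $x_N=\beta$ in the odd case) are treated as \emph{constants}, so every Poisson bracket involving them vanishes identically.

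For the Poisson structure, the only non-vanishing brackets among the coordinate functions of $V=\R^{2N}$ are
\begin{align*}
\{x_k,y_k\}_2=x_ky_k\quad(1\le k\le N),
\qquad
\{x_k,y_{k+1}\}_2=-x_ky_{k+1}\quad(1\le k\le N-1),
\end{align*}
and analogously on $\R^{2N-1}$. Bilinearity, skew-symmetry and the Leibniz rule are built into this definition, so only the Jacobi identity requires verification. Because $\{x_j,x_k\}_2=\{y_j,y_k\}_2=0$ and $\{x_j,y_k\}_2$ is non-zero only for $k\in\{j,j+1\}$, the Jacobiator vanishes automatically on every coordinate triple except those of the form $(x_{k-1},x_k,y_k)$ and $(x_k,y_k,y_{k+1})$; on these two families the cubic terms cancel in pairs by a direct computation. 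These are the same checks that underlie Proposition~\ref{prop:H0H1}, and they are in fact simpler here, since the constant boundary removes the ``wrap-around'' brackets present in the periodic case rather than adding new ones. One implicitly works on the open dense set where all $x_k,y_k$ are non-zero, which is invariant under the flow because $\dot x_k/x_k$ and $\dot y_k/y_k$ are regular there, and which is where $H_{01}$ is defined.

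For the Hamiltonian vector field, take first $V=\R^{2N}$ and an interior index $k$. Since $\{x_k,\cdot\}_2$ pairs only with $y_k$ and $y_{k+1}$,
\begin{align*}
\{H_{01},x_k\}_2
&=-\{x_k,y_k\}_2-\{x_k,y_{k+1}\}_2
+\beta\,\frac{\{x_k,y_k\}_2}{y_k}+\beta\,\frac{\{x_k,y_{k+1}\}_2}{y_{k+1}}\\
&=-x_ky_k+x_ky_{k+1}+\beta x_k-\beta x_k
=x_k(y_{k+1}-y_k),
\end{align*}
so the $\beta$-contributions cancel and \eqref{eq:VL2ffeven12} is recovered; an entirely analogous computation (using that $\{y_k,\cdot\}_2$ pairs only with $x_{k-1}$ and $x_k$, and that the $\alpha$-terms cancel) gives $\{H_{01},y_k\}_2=y_k(x_k-x_{k-1})$, which is \eqref{eq:VL2ffeven21}. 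At the ends, $y_{N+1}=\beta$ and $x_0=\alpha$ do not occur in $H_{01}$ at all, so only the surviving pieces remain:
\begin{align*}
\{H_{01},x_N\}_2=-x_Ny_N+\beta x_N=x_N(\beta-y_N),
\qquad
\{H_{01},y_1\}_2=x_1y_1-\alpha y_1=y_1(x_1-\alpha),
\end{align*}
which are \eqref{eq:VL2ffeven22} and \eqref{eq:VL2ffeven11}. Hence the flow on $\R^{2N}$ is Hamiltonian with respect to $\{\cdot,\cdot\}_2$ with Hamiltonian function $H_{01}$ for all $\alpha,\beta$.

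The odd case $V=\R^{2N-1}$ is treated identically, now with $x_N=\beta$ a constant, so the $x_N$- and $\log x_N$-terms of $H_{01}$ are additive constants and play no role. The interior equations \eqref{eq:VL2ffodd12}--\eqref{eq:VL2ffodd21} and the boundary equation \eqref{eq:VL2ffodd11} come out exactly as above, while the remaining boundary computation yields $\{H_{01},y_N\}_2=-x_{N-1}y_N+\alpha y_N=y_N(\alpha-x_{N-1})$; this matches \eqref{eq:VL2ffodd22} precisely when $\alpha=\beta$, which is the same condition under which $H_{01}$ was already found to be conserved. I expect the \emph{main obstacle} to be organizational rather than conceptual: the only real care is needed in the bookkeeping at the two ends of the lattice (which terms of $H_{01}$ survive, and the sign conventions of $\{\cdot,\cdot\}_2$), the interior being a direct computation of exactly the type already performed after Proposition~\ref{prop:H0H1}.
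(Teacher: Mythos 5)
Your proposal is correct and follows essentially the same route as the paper: a direct computation of the bracket of $H_{01}$ with each coordinate, checking the interior relations and then the boundary ones, with the condition $\alpha=\beta$ entering only through the $y_N$ equation in the odd-dimensional case. The only cosmetic differences are that the paper computes $\{H_{01},\log x_k\}_2$ and $\{H_{01},\log y_k\}_2$ instead of $\{H_{01},x_k\}_2$, $\{H_{01},y_k\}_2$ (trivially equivalent), and that your explicit verification of the Jacobi identity for $\{\cdot,\cdot\}_2$ is left implicit in the paper.
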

%//////////////////// THEOREM ////////////////////%
%//////////////////// THEOREM ////////////////////%
%//////////////////// THEOREM ////////////////////%

\begin{proof}
We have
\begin{align*}
\left\{H_{01},\log x_k\right\}_2
=&-\frac{x_ky_k}{x_k}+\frac{x_ky_{k+1}}{x_k}+\beta\frac{x_ky_k}{x_ky_k}-\beta\frac{x_ky_{k+1}}{x_ky_{k+1}}
=y_{k+1}-y_k,
\\
\left\{H_{01},\log y_k\right\}_2
=&
\frac{x_ky_k}{y_k}-\frac{x_{k-1}y_{k}}{y_k}-\alpha \frac{x_ky_k}{x_ky_k}+\alpha\frac{x_{k-1}y_{k}}{x_{k-1}y_{k}}
=x_{k}-x_{k-1}
\end{align*}
for non-boundary nodes $x_k$ and $y_k$.
If $V=\R^{2N}$, these are also valid for the boundary nodes $y_1$ and $x_N$ for any boundary values $\alpha,\beta$, since we have
\begin{align*}
\left\{H_{01},\log y_1\right\}_2
=&
\frac{x_1y_1}{y_1}-\alpha \frac{x_1y_1}{x_1y_1}
=x_1-\alpha,
\\
\left\{H_{01},\log x_N\right\}_2
=&
-\frac{x_Ny_N}{x_N}+\beta\frac{x_Ny_N}{x_Ny_N}
=\beta-y_N.
\end{align*}
Whereas, if $V=\R^{2N-1}$ and $\alpha=\beta$, at the boundary, we also have $\left\{H_{01},\log y_1\right\}_2=x_1-\alpha$ and
\begin{align*}
\left\{H_{01},\log y_N\right\}_2
=&
-\frac{x_{N-1}y_N}{y_N}+\alpha\frac{x_{N-1}y_N}{x_{N-1}y_N}
=\alpha-x_{N-1}=\beta-x_{N-1}.
\end{align*}

Hence the Hamiltonian flow
\begin{align*}
\frac{d}{dt}\log x_k
&=
\left\{H_{01},\log x_k\right\}_2
=
y_{k+1}-y_k,
\\
\frac{d}{dt}\log y_k
&=
\left\{H_{01},\log y_k\right\}_2
=
x_{k}-x_{k-1}
\end{align*}
is equivalent to the Volterra lattice (\ref{eq:VL2ff1}--\ref{eq:VL2ff2}) for both $V=\R^{2N}$ with arbitrary $\alpha,\beta$ and $V=\R^{2N-1}$ with $\alpha=\beta$.
\end{proof}

%//////////////////// THEOREM ////////////////////%
%//////////////////// THEOREM ////////////////////%
%//////////////////// THEOREM ////////////////////%
\begin{rem}
In the limit as $\alpha,\beta\to0$, the constant boundary condition (\III) reduces to the open-end one (\II), and $H_{01}$ consistently approaches $H_1$, the Hamiltonian of the Volterra lattice with the open-end boundary (\II) with respect to the Poisson bracket $\{\ ,\ \}_2$.
\end{rem}
%//////////////////// THEOREM ////////////////////%
%//////////////////// THEOREM ////////////////////%
%//////////////////// THEOREM ////////////////////%

For the Poisson bracket $\{\ ,\ \}_2$ on the $2N$-dimensional phase space
\begin{align*}
V=\R^{2N}(x_1,\ldots,x_N,y_1,\ldots,y_N),
\end{align*}
we have the skew-symmetric $2N\times 2N$ matrix $A=(A_{ij})$ of the Poisson tensor that possess the following non-zero entries
\begin{align*}
A_{2k-1,2k}&=-x_ky_k&&\mbox{and}&A_{2k,2k-1}&=x_ky_k&&\mbox{for $k=1,2,\ldots,N$},
\\
A_{2k,2k+1}&=-x_ky_{k+1}&&\mbox{and}&A_{2k+1,2k}&=x_ky_{k+1}&&\mbox{for $k=1,2,\ldots,N-1$}.
\end{align*}
The matrix $A$ is non-degenerate, thereby we also have a symplectic structure on $V$.

In order to capture the symplectic structure, we introduce new coordinate variables
\begin{align*}
\x_k&=\log y_1+\log y_2+\cdots+\log y_k,
\\
\p_k&=\log x_k
\end{align*}
for $k=1,2,\ldots,N$.
Then we have
\begin{align*}
\{\p_k,\x_k\}_2
&=
\left\{\log x_k,\log y_k\right\}_2
=
1
\end{align*}
for $k=1,2,\ldots,N$ and
\begin{align*}
\{\p_k,\x_{k+1}\}_2
&=
\left\{\log x_k,\log y_k\right\}_2+\left\{\log x_k,\log y_{k+1}\right\}_2
=
0
\end{align*}
for $k=1,2,\ldots,N-1$.
Hence $\x_1,\ldots,\x_N,\p_1,\ldots,\p_N$ form canonically conjugate coordinates on the symplectic manifold $(V=\R^{2N},\Omega)$ possessing the  symplectic form 
\begin{align}
\Omega
&=
\sum_{k=1}^Nd\p_k\wedge d\x_k
=
\sum_{k=1}^N\sum_{\ell=1}^k\frac{dx_k\wedge dy_\ell}{x_ky_\ell}.
\label{eq:SF}
\end{align}
The symplectic manifold $(V,\Omega)$ is called the canonical phase space of the Hamiltonian flow (\ref{eq:VL2ffeven11}--\ref{eq:VL2ffeven22}).

In these canonical coordinates, the hamiltonian $H_{01}$ is represented by
\begin{align*}
H_{01}(\p,\x)
&=
\sum_{k=1}^Ne^{\p_k}
+
\sum_{k=1}^Ne^{\x_k-\x_{k-1}}
-
\alpha\sum_{k=1}^N\p_k
-
\beta\x_N,
\end{align*}
where we assume $\x_0=0$.
Hence we have
\begin{align*}
\frac{\partial H_{01}}{\partial\p_k}
&=
e^{\p_k}-\alpha
=
x_k-\alpha,
\\
\frac{\partial H_{01}}{\partial\x_k}
&=
e^{\x_k-\x_{k-1}}-e^{\x_{k+1}-\x_k}
=
y_k-y_{k+1},
\\
\dot\x_k
&=
\sum_{\ell=1}^k\frac{\dot y_\ell}{y_\ell}
=
\sum_{\ell=1}^k(x_\ell-x_{\ell-1})
=
x_k-\alpha,
\\
\dot\p_k
&=
\frac{\dot x_k}{x_k}
=
y_{k+1}-y_k.
\end{align*}
Therefore, the Volterra lattice (\ref{eq:VL2ffeven11}--\ref{eq:VL2ffeven22}) is canonically represented by the Hamilton equations
\begin{align}
\dot\x_k
&=
\frac{\partial H_{01}}{\partial\p_k},
\label{eq:canonicalHamiltonianFlow1}\\
\dot\p_k
&=
-\frac{\partial H_{01}}{\partial\x_k}
\label{eq:canonicalHamiltonianFlow2}
\end{align}
for $k=1,2,\ldots,N$.

Thus, we obtain:
%//////////////////// THEOREM ////////////////////%
%//////////////////// THEOREM ////////////////////%
%//////////////////// THEOREM ////////////////////%
\begin{prp}\label{prop:Symplectic}\normalfont
Suppose the constant boundary condition (\III).
The flow (\ref{eq:VL2ffeven11}--\ref{eq:VL2ffeven22}) is a Hamiltonian system on the symplectic manifold $(V=\R^{2N},\Omega)$ with the Hamiltonian function $H_{01}$ with respect to the symplectic form $\Omega$ given by \eqref{eq:SF}.
\qed
\end{prp}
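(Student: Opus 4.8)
The plan is to exhibit explicit Darboux coordinates for the quadratic Poisson bracket $\{\cdot,\cdot\}_2$ on $V=\R^{2N}$ and then read off the Hamilton equations. By Proposition \ref{prop:H01} we already know that the flow (\ref{eq:VL2ffeven11}--\ref{eq:VL2ffeven22}) is Hamiltonian on the Poisson manifold $(V,\{\cdot,\cdot\}_2)$ with Hamiltonian $H_{01}$; the remaining content of the present statement is that $\{\cdot,\cdot\}_2$ is nondegenerate on the relevant chart of $V$, so that it is induced by a symplectic form, and that this form is precisely $\Omega$ from \eqref{eq:SF}.

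First I would record the Poisson tensor. On the constant boundary (\III) the quantities $x_0=\alpha$ and $y_{N+1}=\beta$ are constants, so no coordinate bracket involves them and $A=(A_{ij})$ is a genuine skew $2N\times2N$ matrix with the entries listed above. In the block ordering $(x_1,\ldots,x_N,y_1,\ldots,y_N)$ it has the form $\bigl(\begin{smallmatrix}0&B\\-B^{T}&0\end{smallmatrix}\bigr)$ with $B$ upper bidiagonal, $B_{kk}=x_ky_k$ and $B_{k,k+1}=-x_ky_{k+1}$; hence $\det A=(\det B)^2=\prod_{k=1}^{N}(x_ky_k)^2$, which is nonzero wherever every coordinate is nonzero. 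On that chart $\{\cdot,\cdot\}_2$ is therefore nondegenerate and determines a symplectic form $\Omega$.

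Next I would introduce $\p_k=\log x_k$ and $\x_k=\sum_{\ell=1}^{k}\log y_\ell$, a diffeomorphism of the positive orthant onto $\R^{2N}$. Using bilinearity of the bracket together with the defining relations of $\{\cdot,\cdot\}_2$, a short computation gives $\{\p_k,\x_k\}_2=\{\log x_k,\log y_k\}_2=1$ and $\{\p_k,\x_{k+1}\}_2=\{\log x_k,\log y_k\}_2+\{\log x_k,\log y_{k+1}\}_2=0$, while all remaining brackets among the $\p$'s and $\x$'s vanish because the corresponding coordinate brackets do. Thus $(\x_1,\ldots,\x_N,\p_1,\ldots,\p_N)$ are canonically conjugate, so $\Omega=\sum_{k=1}^{N}d\p_k\wedge d\x_k$, and substituting $d\p_k=dx_k/x_k$ and $d\x_k=\sum_{\ell\le k}dy_\ell/y_\ell$ produces exactly the expression \eqref{eq:SF}. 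Finally I would rewrite $H_{01}$ in these coordinates using $x_k=e^{\p_k}$ and $y_k=e^{\x_k-\x_{k-1}}$ with the convention $\x_0:=0$, compute $\partial H_{01}/\partial\p_k=x_k-\alpha$ and $\partial H_{01}/\partial\x_k=y_k-y_{k+1}$ (reading $y_{N+1}=\beta$), and then verify via the telescoping sum $\dot\x_k=\sum_{\ell\le k}\dot y_\ell/y_\ell=\sum_{\ell\le k}(x_\ell-x_{\ell-1})=x_k-\alpha$ and via $\dot\p_k=\dot x_k/x_k=y_{k+1}-y_k$ that (\ref{eq:VL2ffeven11}--\ref{eq:VL2ffeven22}) coincides with the Hamilton equations $\dot\x_k=\partial H_{01}/\partial\p_k$, $\dot\p_k=-\partial H_{01}/\partial\x_k$.

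The step I expect to require the most care is not conceptual but a matter of boundary bookkeeping: one must use the conventions $\x_0:=0$, $x_0=\alpha$, $y_{N+1}=\beta$ consistently so that the telescoping sum for $\dot\x_N$ and the partial derivative $\partial H_{01}/\partial\x_N$ reproduce the special equations (\ref{eq:VL2ffeven11}) and (\ref{eq:VL2ffeven22}) rather than the generic bulk form, and one must keep the whole argument on the open chart where every $x_k,y_k$ is nonzero so that the logarithmic change of variables is a bona fide diffeomorphism. The statement as phrased on $V=\R^{2N}$ then follows because the Volterra flow preserves the sign of each coordinate, so no orbit leaves the chart on which $\Omega$ was defined.
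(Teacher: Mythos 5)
Your proposal is correct and follows essentially the same route as the paper: pass to the canonical coordinates $\p_k=\log x_k$, $\x_k=\sum_{\ell\le k}\log y_\ell$, verify the canonical bracket relations so that $\{\cdot,\cdot\}_2$ is induced by $\Omega=\sum_k d\p_k\wedge d\x_k$, and check that $H_{01}$ generates (\ref{eq:VL2ffeven11}--\ref{eq:VL2ffeven22}) via Hamilton's equations (\ref{eq:canonicalHamiltonianFlow1}--\ref{eq:canonicalHamiltonianFlow2}). Your explicit determinant computation $\det A=\prod_k(x_ky_k)^2$ and the remark about restricting to the chart where all coordinates are nonzero merely make explicit what the paper asserts without proof, so they are welcome but not a different argument.
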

%//////////////////// THEOREM ////////////////////%
%//////////////////// THEOREM ////////////////////%
%//////////////////// THEOREM ////////////////////%

For the Poisson bracket $\{\ ,\ \}_2$ on the $(2N-1)$-dimensional phase space
\begin{align*}
V=\R^{2N-1}(x_1,\ldots,x_{N-1},y_1,\ldots,y_N),
\end{align*}
the skew-symmetric $(2N-1)\times (2N-1)$ matrix of the Poisson tensor is degenerate, thereby it does not define a symplectic structure on $V$.

It should be remarked that the Hamiltonian $H_{01}$ and the functions $G_i\in\F(V)$ ($i=1,2$) are in involution with respect to the Poisson bracket $\{\ ,\ \}_2$ on $V=\R^{2N-1}$,
\begin{align*}
\left\{
H_{01},G_1
\right\}_2
=
\left\{
H_{01},G_2
\right\}_2
=0.
\end{align*}
Indeed, we have $H_{01}=G_1-\alpha G_2$ and
\begin{align*}
\left\{
G_1,G_2
\right\}_2
=&
\sum_{k=1}^{N}
\left\{
x_k-\alpha\log x_k,
\log y_k
\right\}_2
+
\sum_{k=1}^{N-1}
\left\{
x_k-\alpha\log x_k,
\log y_{k+1}
\right\}_2
\\
=&
\sum_{k=1}^{N}(x_k-\alpha)
-
\sum_{k=1}^{N-1}(x_k-\alpha)
\\
=&
x_N-\alpha 
=
0,
\end{align*}
where we use the assumption $x_N=\beta=\alpha$.
Thus, if $\alpha=\beta$, there exist at least two functionally independent conserved quantities for the Hamiltonian flow (\ref{eq:VL2ffodd11}--\ref{eq:VL2ffodd22}) on the Poisson manifold $(V=\R^{2N-1},\{\ ,\ \}_2)$.
 
%%%%% SUBSECTION %%%%%
\subsection{Complete integrability}\label{subsec:CI}
%%%%% SUBSECTION %%%%%

By virtue of the above observation, we easily find two completely integrable Hamiltonian systems equipped with the constant boundary (\III).

The first one is the case where $M=2N$ and $N=1$.
The Hamiltonian $H_{01}$ achieves a sufficient number of conserved quantities since the phase space $V$ is of two-dimension.
The Hamiltonian flow (\ref{eq:VL2ffeven11}--\ref{eq:VL2ffeven22}) then reduces to
\begin{align}
\dot x_1
&=
x_1
\left(
\beta-y_1
\right),
\label{eq:VL2ffeven2dim1}\\
\dot y_1
&=
y_1
\left(
x_1-\alpha
\right),
\qquad
\label{eq:VL2ffeven2dim2}
\end{align}
and the Hamiltonian
\begin{align*}
H_{01}(x_1,y_1)
&=
x_1+y_1-\alpha \log x_1-\beta\log y_1
\end{align*}
to the conserved quantity.

If $\alpha >0$ and as $\beta \to0$, (\ref{eq:VL2ffeven2dim1}--\ref{eq:VL2ffeven2dim2}) is nothing but the SIR epidemic model \cite{KM27}, which is  known as an integrable dynamical system crucial for mathematical analysis on the spread of infectious diseases.
Moreover, if $\alpha >0$ and $\beta <0$, (\ref{eq:VL2ffeven2dim1}--\ref{eq:VL2ffeven2dim2}) is an integrable extension of the SIR model under the influence of vaccination, called the SIRv model \cite{Hethcote74}.
In the next section, we will investigate (\ref{eq:VL2ffeven2dim1}--\ref{eq:VL2ffeven2dim2}) precisely, and reveal the relation with an exact differential equation via Abel's equation of the first kind.
We moreover provide the general solution to the initial value problem of (\ref{eq:VL2ffeven2dim1}--\ref{eq:VL2ffeven2dim2}) in terms of the Lambert W function.

As mentioned earlier, the symplectic structure of (\ref{eq:VL2ffeven2dim1}--\ref{eq:VL2ffeven2dim2}) is explicitly given as
\begin{align}
\Omega
&=
d\p\wedge d\x
=
\frac{dx_1\wedge dy_1}{x_1y_1},
\label{eq:SF2dim}
\end{align}
where the canonically conjugate coordinates $\x$ and $\p$ are given by
\begin{align*}
\x=\log y_1
\qquad
\mbox{and}
\qquad
\p=\log x_1.
\end{align*}
The Hamiltonian in the canonically conjugate coordinates,
\begin{align*}
H_{01}(\p,\x)
&=
e^\p+e^\x-\alpha \p-\beta \x,
\end{align*}
solves Hamilton's canonical equations (\ref{eq:canonicalHamiltonianFlow1}--\ref{eq:canonicalHamiltonianFlow2}) of motion with $N=1$.

The case where $M=2N-1$ and $N=2$ includes the second completely integrable system.
In this three-dimensional system, we moreover assume $\alpha=\beta$.
Then the Hamiltonian flow (\ref{eq:VL2ffodd11}--\ref{eq:VL2ffodd22}) reduces to the system
\begin{align}
\dot y_1
&=
y_1
\left(
x_1
-
\alpha
\right),
\label{eq:VL2ffodd3dim1}\\
\dot x_1
&=
x_1
\left(
y_2
-
y_1
\right),
\label{eq:VL2ffodd3dim2}\\
\dot y_2
&=
y_2
\left(
\alpha
-
x_1
\right),
\label{eq:VL2ffodd3dim3}
\end{align}
which possesses two conserved quantities:
\begin{align*}
G_1(x_1,y_1,y_2) &= y_1 + x_1 + y_2 - \alpha\log x_1,
\\
G_2(x_1,y_1,y_2) &= \log y_1 + \log y_2.
\end{align*}
However, having these conserved quantities is sufficient for (\ref{eq:VL2ffodd3dim1}--\ref{eq:VL2ffodd3dim3}) to exhibit complete integrability, since the phase space $V$ is of three-dimension.

We can eliminate $y_2$ from (\ref{eq:VL2ffodd3dim1}--\ref{eq:VL2ffodd3dim3})  by employing the conserved quantity $G_2$.
Let $G_2$ be a constant $\log\ell$ ($\ell>0$).
Then we have $y_1y_2=\ell$.
By replacing $y_2$ with $\ell/y_1$, (\ref{eq:VL2ffodd3dim1}--\ref{eq:VL2ffodd3dim2}) and $G_1$ respectively reduce to
\begin{align*}
\dot y_1
&=
y_1(x_1-\alpha),
\\
\dot x_1
&=
x_1\left(\frac{\ell}{y_1}-y_1\right)
\end{align*}
and
\begin{align*}
G_1(x_1,y_1)
&=
y_1+x_1+\frac{\ell}{y_1}-\alpha\log x_1.
\end{align*}
This is a two-dimensional completely integrable system possessing the conserved quantity $G_1$, but it differs slightly from (\ref{eq:VL2ffeven2dim1}--\ref{eq:VL2ffeven2dim2}), or the SIRv model. 
In the limit as $\ell\to 0$, it approaches the SIR model. However, its significance as an epidemic model has remained unclear.

%%%%% SECTION %%%%%
\section{Abel's equation of the first kind and the SIRv model}\label{sec:AbelSIRv}
%%%%% SECTION %%%%%
In the previous section, we observed that the SIRv model is a completely integrable Hamiltonian system on the symplectic manifold $(V=\R^2,\Omega)$, where $\Omega$ is the symplectic form \eqref{eq:SF2dim}.
Meanwhile, in this section, we investigate the SIRv model from another perspective on integrability through a first-order nonlinear differential equation of degree three known as Abel's equation of the first kind \cite{MR16}. Abel's equation of the first kind is a generalization of the Riccati equation, and similar to the Riccati equation, it admits the completely integrability under certain conditions.

%%%%% SUBSECTION %%%%%
\subsection{Abel's equation of the first kind}\label{subsec:Abel}
%%%%% SUBSECTION %%%%%
Let us consider the following first-order nonlinear ODE of degree three pertaining to $\phi=\phi(x)$
\begin{align}
\frac{d\phi}{dx}
&=
f_0
+
f_1\phi
+
f_2\phi^2
+
f_3\phi^3,
\label{eq:AbelODE1k}
\end{align}
where  $f_0$, $f_1$, $f_2$ and $f_3$ are meromorphic functions in $x$.
The equation \eqref{eq:AbelODE1k} is called Abel's equation of the first kind if $f_3\not\equiv0$, whereas the Riccati equation if $f_3\equiv0$ and $f_2\not\equiv0$.

Hereafter we assume $f_3\not\equiv0$, which makes \eqref{eq:AbelODE1k} Abel's equation of the first kind.
We have the following lemma \cite{MR16}.

%//////////////////// THEOREM ////////////////////%
%//////////////////// THEOREM ////////////////////%
%//////////////////// THEOREM ////////////////////%
\begin{lmm}\label{lmm:Abelexact}
If $f_0$, $f_1$, $f_2$ and $f_3$ satisfy
\begin{align}
f_0\equiv&0,
\label{eq:f0ODE}\\
\frac{d}{dx}\log\frac{f_2}{f_3}=&f_1
\label{eq:f1f2f3ODE}
\end{align}
then \eqref{eq:AbelODE1k} reduces to an exact differential equation.
\end{lmm}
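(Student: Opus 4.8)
The plan is to produce, under the two hypotheses, an explicit change of the dependent variable that turns \eqref{eq:AbelODE1k} into a \emph{separable} first‑order equation; a separable equation $P(x)\,dx+Q(\psi)\,d\psi=0$ is the textbook example of an exact differential equation, so exhibiting such a reduction suffices.

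First I would invoke \eqref{eq:f0ODE} to drop the $f_0$ term, leaving $d\phi/dx=f_1\phi+f_2\phi^2+f_3\phi^3$. Note that \eqref{eq:f1f2f3ODE} tacitly requires $f_2\not\equiv 0$ (otherwise $\log(f_2/f_3)$ is undefined), while $f_3\not\equiv0$ is the running assumption, so $g:=f_2/f_3$ is a meromorphic function that is not identically zero; on any domain on which $g$ is holomorphic and non‑vanishing the substitution $\phi=g(x)\psi$ is an invertible change of dependent variable. Inserting $\phi=g\psi$ and $\phi'=g'\psi+g\psi'$ into the equation gives $g\psi'=(f_1g-g')\psi+f_2g^2\psi^2+f_3g^3\psi^3$.

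Now the two hypotheses do exactly what is needed. The relation \eqref{eq:f1f2f3ODE} says precisely $g'/g=f_1$, i.e.\ $f_1g-g'=0$, so the term linear in $\psi$ disappears; and $f_3g^2=f_3(f_2/f_3)^2=f_2^2/f_3=f_2(f_2/f_3)=f_2g$, so the quadratic and cubic coefficients coincide. Dividing by $g$ therefore yields $\psi'=f_2g\,(\psi^2+\psi^3)=(f_2^2/f_3)\,\psi^2(1+\psi)$, whose right‑hand side is a function of $x$ times a function of $\psi$. Writing this as $\frac{d\psi}{\psi^2(1+\psi)}-\frac{f_2^2}{f_3}\,dx=0$ displays an exact differential equation: the coefficient of $dx$ depends only on $x$ and the coefficient of $d\psi$ only on $\psi$, so both mixed partial derivatives vanish and a potential exists (its equipotential curves being $\log\frac{1+\psi}{\psi}-\frac1\psi-\int\frac{f_2^2}{f_3}\,dx=\mathrm{const}$, which is the form that will later be recast through the Lambert $W$ function).

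The only genuinely non‑mechanical step is guessing the correct scaling $\phi=(f_2/f_3)\,\psi$; once it is in hand the rest is a two‑line computation, and the sole point deserving a word of care is the harmless, purely local requirement that $g=f_2/f_3$ be non‑vanishing so that the substitution is reversible.
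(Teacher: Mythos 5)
Your argument is correct, and it proves the lemma by a different mechanism than the paper. You rescale the unknown by $g=f_2/f_3$ and use \eqref{eq:f1f2f3ODE} to kill the linear term while identifying the quadratic and cubic coefficients, arriving at the separable (hence exact) equation $\frac{d\psi}{\psi^2(1+\psi)}-\frac{f_2^2}{f_3}\,dx=0$. The paper instead sets $\phi\varphi=F:=\exp\int f_1\,dx$, obtains $\varphi\varphi'+f_2F\varphi+f_3F^2=0$, and multiplies by the explicit integrating factor $\varpi=e^{-(1+k\varphi)-kG}$, where $k=f_2/(f_3F)$ is constant precisely because of \eqref{eq:f1f2f3ODE} and $G=\int f_2F\,dx$, producing the potential $\Psi=(1+k\varphi)\varpi$. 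The two routes are equivalent up to inversion and exponentiation: your $\psi$ is $1/(k\varphi)$ up to the constant $k$, one has $\int (f_2^2/f_3)\,dx=kG$, and your potential $\log\frac{1+\psi}{\psi}-\frac{1}{\psi}-\int\frac{f_2^2}{f_3}\,dx$ equals $\log\Psi+1$, so the level curves coincide. What your route buys is transparency: no integrating factor needs to be guessed, and separability is immediate once the scaling is made. What the paper's route buys is the specific exponential form of $\Psi$, which is exactly what Proposition \ref{prp:Exactlsol} and the SIRv application match to the Lambert curve $we^w=z$ and to $-\frac{1}{\nu}\exp(H_{01}/\nu)$; moreover its integrating factor $\varpi$ is nowhere zero, so the constant solution $1+k\varphi=0$ (your $\psi\equiv-1$, which is lost when you divide by $\psi^2(1+\psi)$) survives as the level set $\Psi=0$. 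Your local caveat that $f_2/f_3$ must be nonvanishing for the substitution to be invertible is the right one and is no more restrictive than the assumptions implicit in the paper's own proof.
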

%//////////////////// THEOREM ////////////////////%
%//////////////////// THEOREM ////////////////////%
%//////////////////// THEOREM ////////////////////%

\begin{proof}
First introduce a new dependent variable $\varphi=\varphi(x)$ such that
\begin{align*}
\phi\varphi=\exp\left(\int f_1dx\right)=:F(x).
\end{align*}
If $f_0\equiv0$, \eqref{eq:AbelODE1k} reduces to the following ODE concerning $\varphi$:
\begin{align}
\varphi\frac{d\varphi}{dx}+f_2F\varphi+f_3F^2=0.
\label{eq:ExactAbel}
\end{align}

Next consider a function $\varpi$ in $x$ and $\varphi$:
\begin{align*}
\varpi(x,\varphi)
&=
e^{-(1+k\varphi)-kG},
\end{align*}
where
\begin{align*}
G(x)
:=
\int f_2Fdx
=
\int f_2\exp\left(\int f_1dx\right)dx
\end{align*}
and $k$ is the integration constant of \eqref{eq:f1f2f3ODE}:
\begin{align*}
k
&=
\frac{f_2}{f_3F}.
\end{align*}
By multiplying the integrating factor $\varpi$, \eqref{eq:ExactAbel} reduces to the exact differential equation
\begin{align}
d\Psi(x,\varphi)
&=
0
\label{eq:Exact}
\end{align}
possessing the potential
\begin{align*}
\Psi(x,\varphi)
&=
(1+k\varphi)\varpi(x,\varphi).
\end{align*}
Indeed, we compute
\begin{align*}
d\Psi(x,\varphi)
&=
-k^2\varpi\left[\left(f_2F\varphi+f_3F^2\right)dx+\varphi d\varphi\right],
\end{align*}
which asserts the equivalence of \eqref{eq:ExactAbel} and \eqref{eq:Exact}.
\end{proof}

The solution to the exact differential equation \eqref{eq:Exact} is provided by using the Lambert W function $W(z)$. 
Through the relationship $w=W(z)$, this function parametrizes the following non-algebraic curve on the $(z,w)$-plane, known as the Lambert curve \cite{CGHJK96}:
\begin{align*}
\left(we^w-z=0\right).
\end{align*}

The Lambert W function $W(z)$ is single-real-valued on $[0,\infty)$, while it is double-real-valued on $[-e^{-1},0)$ and is not defined on $(-\infty,-e^{-1})$ as a real function.
We denote the brach such that $-1\leq W(z)<0$ by $W_0(z)$, and the one such that $W(z)<-1$ by $W_{-1}(z)$.
Hence the function
\begin{align*}
W(z)
&=
\begin{cases}
W_0(z)&(-e^{-1}\leq z<0),\\
W(z)&(z\geq0)
\end{cases}
\end{align*}
is real analytic on $[-e^{-1},\infty)$.
The Taylor series of $W(z)$ about 0 is 
\begin{align*}
\sum_{n=0}^\infty(-1)^{n+1}\frac{n^{n-1}}{n!}z^n
\end{align*}
with the radius $e^{-1}$ of convergence.

%//////////////////// THEOREM ////////////////////%
%//////////////////// THEOREM ////////////////////%
%//////////////////// THEOREM ////////////////////%
\begin{prp}\label{prp:Exactlsol}
Let
\begin{align}
\varphi(x)
=
-\frac{1}{k}\left[W\left(-\Psi^0e^{kG}\right)+1\right]
\label{eq:SolExact}
\end{align}
using the Lambert W function, where $\Psi^0$ is the initial value of the potential $\Psi$.
Then $\varphi$ solves the initial value problem of the exact differential equation \eqref{eq:Exact}, which is equivalent to \eqref{eq:ExactAbel} with imposing \eqref{eq:f1f2f3ODE}.
\end{prp}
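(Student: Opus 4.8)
The plan is to verify directly that the proposed expression \eqref{eq:SolExact} is precisely the inversion of the relation $\Psi(x,\varphi)=\Psi^0$ obtained by integrating the exact equation \eqref{eq:Exact}, and then check that it reproduces the initial value. Since Lemma \ref{lmm:Abelexact} already establishes that \eqref{eq:ExactAbel} with \eqref{eq:f1f2f3ODE} is equivalent to $d\Psi=0$, the general solution of the initial value problem is nothing but the level set $\Psi(x,\varphi)=\Psi^0$, where $\Psi^0$ is the value of $\Psi$ at the initial point. So the entire content of the proposition is: \emph{solving $\Psi(x,\varphi)=\Psi^0$ for $\varphi$ yields \eqref{eq:SolExact}}.

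First I would recall from the proof of Lemma \ref{lmm:Abelexact} that
\begin{align*}
\Psi(x,\varphi)=(1+k\varphi)\,e^{-(1+k\varphi)-kG(x)}.
\end{align*}
Setting $u:=1+k\varphi$, the level-set equation $\Psi=\Psi^0$ becomes $u\,e^{-u}e^{-kG}=\Psi^0$, i.e. $u e^{-u}=\Psi^0 e^{kG}$. Multiplying by $-1$ and writing $v:=-u$ gives $v e^{v}=-\Psi^0 e^{kG}$, which is exactly the defining relation $we^w=z$ of the Lambert curve with $z=-\Psi^0 e^{kG}$ and $w=v$. Hence $v=W\!\left(-\Psi^0 e^{kG}\right)$, so $u=-W\!\left(-\Psi^0 e^{kG}\right)$ and therefore $\varphi=(u-1)/k=-\frac1k\bigl[W\!\left(-\Psi^0 e^{kG}\right)+1\bigr]$, which is \eqref{eq:SolExact}. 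To close the argument I would check consistency at the initial point $x=x_0$: there $G$ equals its initial value, $\Psi^0 e^{kG(x_0)}$ equals $(1+k\varphi^0)e^{-(1+k\varphi^0)}$ by definition of $\Psi^0$, and since $W$ inverts $w\mapsto we^w$ on the branch containing $-(1+k\varphi^0)$ one recovers $\varphi(x_0)=\varphi^0$; this is also where one should note that the formula selects the correct branch ($W_0$ or $W_{-1}$) according to the sign of $1+k\varphi^0$.

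The only genuine subtlety — and the step I expect to need the most care — is the branch issue for $W$. The equation $ue^{-u}=\Psi^0 e^{kG}$ has, in general, two solutions for $u$ (since $ue^{-u}$ is not injective), corresponding to the two real branches $W_0$ and $W_{-1}$; one must argue that along the solution curve the quantity $-(1+k\varphi)$ stays on a single branch, which follows from continuity of the flow together with the fact that the two branches meet only at the critical point $-\Psi^0 e^{kG}=-e^{-1}$, i.e. where $\Psi$ attains its extremal value. Away from that point the branch is fixed by the initial data, and on it $W$ (as made real-analytic on $[-e^{-1},\infty)$ in the discussion preceding the proposition) furnishes $\varphi$ as a genuine function of $x$. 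Once this is said, everything else is the elementary algebraic inversion carried out above, and differentiating \eqref{eq:SolExact} to confirm it satisfies \eqref{eq:Exact} is then automatic from $\frac{d}{dz}\bigl(W(z)e^{W(z)}\bigr)=1$, or may simply be omitted since it is equivalent to the level-set computation.
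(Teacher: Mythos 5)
Your proposal is correct and follows essentially the same route as the paper: the paper also writes the solution curve as the level set $\Psi(x,\varphi)=\Psi^0$, rewrites it as $-(1+k\varphi)e^{-(1+k\varphi)}=-\Psi^0e^{kG}$, identifies this with the Lambert curve $we^w=z$ via $w=-(1+k\varphi)$, $z=-\Psi^0e^{kG}$, and inverts to get \eqref{eq:SolExact}. Your additional care about branch selection is sound but is handled by the paper separately, in the discussion following the proposition and in Proposition \ref{prop:potential}.
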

%//////////////////// THEOREM ////////////////////%
%//////////////////// THEOREM ////////////////////%
%//////////////////// THEOREM ////////////////////%

\begin{proof}
First note that, given initial value $x_0$ of $x$, the solution curve, or the equipotential curve, of the exact differential equation \eqref{eq:Exact} is given by
\begin{align}
\Psi(x,\varphi)
&=
(1+k\varphi)\varpi(x,\varphi)
=
\Psi(x_0,\varphi(x_0))
=\Psi^0.
\label{eq:SolCurveExact}
\end{align}
Since $\varpi(x,\varphi)=e^{-(1+k\varphi)-kG}$,  we have
\begin{align}
-(1+k\varphi)e^{-\left(1+k\varphi\right)}
&=
-\Psi^0e^{kG}.
\label{eq:Psi0}
\end{align}
The solution curve is the Lambert curve on the $(z,w)$-plane by imposing
\begin{align*}
z=-\Psi^0e^{kG}
\quad
\mbox{and}
\quad
w=-(1+k\varphi).
\end{align*}
It immediately follows
\begin{align}
-(1+k\varphi)
&=
W\left(-\Psi^0e^{kG}\right)
\label{eq:LambertWPsi}
\end{align}
from \eqref{eq:Psi0}.
Thus the solution to the exact differential equation \eqref{eq:Exact} is given by \eqref{eq:SolExact}.
\end{proof}

%//////////////////// THEOREM ////////////////////%
%//////////////////// THEOREM ////////////////////%
%//////////////////// THEOREM ////////////////////%
\begin{crl}\label{crl:Abelsol}
Let
\begin{align}
\phi(x)
=
\frac{F}{\varphi}
=
-\frac{kF}{W\left(-\Psi^0e^{kG}\right)+1}.
\label{eq:SolAbel}
\end{align}
Then $\phi$ solves the initial value problem of Abel's equation \eqref{eq:AbelODE1k} of the first kind with imposing \eqref{eq:f0ODE} and \eqref{eq:f1f2f3ODE}.
\qed
\end{crl}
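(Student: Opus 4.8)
The plan is to derive the corollary directly from Proposition \ref{prp:Exactlsol} by unwinding the substitution $\phi\varphi = F(x)$ that was introduced in the proof of Lemma \ref{lmm:Abelexact}. First I would recall that under the hypotheses \eqref{eq:f0ODE} and \eqref{eq:f1f2f3ODE}, Lemma \ref{lmm:Abelexact} established that the change of dependent variable $\phi = F/\varphi$ with $F = \exp\left(\int f_1\,dx\right)$ transforms Abel's equation \eqref{eq:AbelODE1k} into the ODE \eqref{eq:ExactAbel} for $\varphi$, which is in turn equivalent to the exact equation \eqref{eq:Exact}. Hence any solution $\varphi$ of the initial value problem for \eqref{eq:Exact} yields, via $\phi = F/\varphi$, a solution of the initial value problem for \eqref{eq:AbelODE1k}, provided the initial data correspond under the same relation.

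Next I would invoke Proposition \ref{prp:Exactlsol}, which gives the explicit solution
\[
\varphi(x) = -\frac{1}{k}\left[W\left(-\Psi^0 e^{kG}\right) + 1\right]
\]
of the initial value problem for \eqref{eq:Exact}. Substituting this into $\phi = F/\varphi$ immediately produces
\[
\phi(x) = \frac{F}{\varphi} = \frac{F}{-\frac{1}{k}\left[W\left(-\Psi^0 e^{kG}\right)+1\right]} = -\frac{kF}{W\left(-\Psi^0 e^{kG}\right)+1},
\]
which is exactly \eqref{eq:SolAbel}. I would then check that the initial condition is respected: if $\phi^0$ is the prescribed initial value of $\phi$ at $x = x_0$, one sets $\varphi^0 = F(x_0)/\phi^0$, and $\Psi^0 = \Psi(x_0,\varphi^0) = (1+k\varphi^0)\varpi(x_0,\varphi^0)$ is the corresponding initial value of the potential, so that the formula for $\varphi$ from Proposition \ref{prp:Exactlsol} reproduces $\varphi^0$ at $x_0$ and hence $\phi$ reproduces $\phi^0$.

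The only genuinely delicate point — and the one I would flag as the main obstacle — is the equivalence of the two initial value problems near points where the substitution degenerates, i.e. where $\varphi$ (equivalently $W(-\Psi^0 e^{kG})+1$) vanishes or where $\phi$ blows up, and the requirement $f_3\not\equiv 0$, $k\neq 0$ so that the division by $k$ and by $F$ makes sense. Since $F = \exp\left(\int f_1\,dx\right)$ is nowhere zero by construction and $k = f_2/(f_3 F)$ is a nonzero constant under \eqref{eq:f1f2f3ODE}, the transformation $\phi \leftrightarrow \varphi$ is a genuine bijection away from the zero set of $\varphi$, and there the two problems are equivalent; at a zero of $\varphi$ the Abel solution $\phi$ has a pole, consistent with the meromorphic setting of \eqref{eq:AbelODE1k}. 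Everything else is a direct substitution, so the corollary follows with no further computation, which is why the statement is marked \qed in the excerpt.
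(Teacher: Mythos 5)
Your proposal is correct and follows essentially the same route as the paper: the corollary is stated with \qed precisely because it is an immediate consequence of Proposition \ref{prp:Exactlsol} via the substitution $\phi=F/\varphi$ introduced in the proof of Lemma \ref{lmm:Abelexact}, which is exactly the unwinding you carry out. Your additional remarks on matching the initial data through $\Psi^0$ and on the nondegeneracy of $F$ and $k$ are consistent with the paper's setting and add nothing contradictory.
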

%//////////////////// THEOREM ////////////////////%
%//////////////////// THEOREM ////////////////////%
%//////////////////// THEOREM ////////////////////%

Remark that the real-valued Lambert W function $W(z)$ is defined only for $z\in[-e^{-1},\infty)\subset\R$.
Moreover, $W(z)$ is double-valued and has the branch point only at $z=-e^{-1}$.
Given initial value $x_0$ of $x$, the initial value $\Psi^0$  of the potential $\Psi(x,\varphi)$ at $(x_0,\varphi(x_0))$ is uniquely determined.
Therefore, the solutions \eqref{eq:SolExact} and \eqref{eq:SolAbel} potentially possess the branch point only at $x$ that satisfies
\begin{align}
\Psi^0
&=
e^{-kG(x)-1}.
\label{eq:branchExact}
\end{align}
The branches of $W(z)$ in \eqref{eq:SolExact} and \eqref{eq:SolAbel} are uniquely determined as follows.

(1)\
If $\Psi^0>0$ then the solution curve $\varphi=\varphi(x)$ is restricted to the region
\begin{align*}
\mathcal{S}:=\left\{(x,\varphi)\ |\ k\varphi>-1\right\}
\end{align*}
on the $(x,\varphi)$-plane, since $\varpi$ is always positive (see \eqref{eq:SolCurveExact}).
In this case, $W$ takes negative value (see \eqref{eq:LambertWPsi}), therefore, we have $\Psi^0\leq e^{-kG-1}$, since $W(z)$ is defined only on $z\geq-e^{-1}$.
Thus $\Psi^0$ obeys
\begin{align*}
0<
\Psi^0
\leq e^{-kG-1}.
\end{align*}

Moreover, there exists exactly a branch point $x$ when $\Psi^0>0$ (see \eqref{eq:branchExact}), thereby $\varphi(x)$ is double-valued on $\mathcal{S}$.
Let
\begin{align*}
\mathcal{S}_-:=\left\{(x,\varphi)\ |\ -1<k\varphi\leq0\right\}\subset\mathcal{S}.
\end{align*}
Since $\varphi(x)$ is given by \eqref{eq:SolExact}, we choose the branch $W_0$ such that $W_0\geq-1$ for $(x,\varphi)\in\mathcal{S}_-$ to give the solution \eqref{eq:SolExact}.
Whereas, we choose another branch $W_{-1}$ such that $W_{-1}<-1$ for $(x,\varphi)\in\mathcal{S}_+$ to give \eqref{eq:SolExact}, where we let
\begin{align*}
\mathcal{S}_+:=\mathcal{S}\backslash\mathcal{S}_-=\left\{(x,\varphi)\ |\ k\varphi>0\right\}.
\end{align*}

(2)\
If $\Psi^0<0$ then the solution curve is contained in
\begin{align*}
\mathcal{T}:=\left\{(x,\varphi)\ |\ k\varphi<-1\right\}.
\end{align*}
For such an initial value $\Psi^0$ of $\Psi$, $W$ in \eqref{eq:SolExact} necessarily takes positive value.
This is consistent with the positivity of the independent variable $-\Psi^0e^{kG(x)}$ of $W$, because $W$ is positive single-valued on the positive real axis.
Hence,  for $(x,\varphi)\in\mathcal{T}$, the solution \eqref{eq:SolExact} is uniquely given by the single-valued $W$ defined on $[0,\infty)$.

(3)\
Finally, if $\Psi^0=0$ then $(x,\varphi)$ is on the boundary between $\mathcal{S}$ and $\mathcal{T}$.
Hence, $k\varphi(x)\equiv-1$, thereby $\Psi\equiv0$.
This gives a constant solution to \eqref{eq:ExactAbel}, which is a special solution contained in \eqref{eq:SolExact}. 

Above discussion is summarized into the following:

%//////////////////// THEOREM ////////////////////%
%//////////////////// THEOREM ////////////////////%
%//////////////////// THEOREM ////////////////////%
\begin{prp}\label{prop:potential}
Assume \eqref{eq:f0ODE} and \eqref{eq:f1f2f3ODE} to be satisfied in order that Abel's equation \eqref{eq:ExactAbel} of the first kind reduces to the exact differential equation \eqref{eq:Exact}.
Let $\Psi^0$ be the initial value of the potential $\Psi$ to \eqref{eq:Exact}.
\begin{itemize}
\item[(1)] If $\Psi^0>0$, the Lambert W function $W$ in the solution \eqref{eq:SolExact} to \eqref{eq:ExactAbel} is either 
\begin{itemize}
\item[(a)] the branch $W_0$ for $(x,\varphi)\in\mathcal{S}_-$ or
\item[(b)] the branch $W_{-1}$ for $(x,\varphi)\in\mathcal{S}_+$, 
\end{itemize}
both of which are defined on $[-e^{-1},0)$.

\item[(2)] If $\Psi^0\leq0$ then $W$ in \eqref{eq:SolExact} is the single-valued W function defined on $[0,\infty)$.
\qed
\end{itemize}
\end{prp}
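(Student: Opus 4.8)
The plan is to obtain the branch assignment directly from the equipotential-curve equation \eqref{eq:SolCurveExact}, using only that the integrating factor $\varpi$ is strictly positive. First I would recall from the proof of Proposition~\ref{prp:Exactlsol} that any solution of \eqref{eq:Exact} sits on a level set $\Psi(x,\varphi)=(1+k\varphi)\varpi(x,\varphi)=\Psi^0$ and that the substitution $w=-(1+k\varphi)$, $z=-\Psi^0e^{kG}$ converts this into the Lambert-curve relation $we^w=z$, i.e.\ $w=W(z)$. Since $\varpi>0$, the sign of $1+k\varphi$ along the solution curve equals the sign of $\Psi^0$; this single observation drives the whole case split, and the remaining work is bookkeeping of the domains and ranges of the real branches of $W$.

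Next I would handle $\Psi^0>0$. Then $1+k\varphi>0$, so the curve lies in $\mathcal{S}=\{k\varphi>-1\}$ and $w=-(1+k\varphi)<0$; as the real Lambert function is negative only on $[-e^{-1},0)$, this forces $z=-\Psi^0e^{kG}\in[-e^{-1},0)$, equivalently $0<\Psi^0\le e^{-kG-1}$, with equality precisely at the branch point \eqref{eq:branchExact}. To pick the correct one of the two real branches there, I would compare $w$ with $-1$: on $\mathcal{S}_-=\{-1<k\varphi\le0\}$ one has $w\in[-1,0)$, the range of $W_0$, so $W=W_0$; on $\mathcal{S}_+=\mathcal{S}\backslash\mathcal{S}_-=\{k\varphi>0\}$ one has $w<-1$, the range of $W_{-1}$, so $W=W_{-1}$. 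That establishes part~(1).

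For part~(2) I would first treat $\Psi^0<0$: now $1+k\varphi<0$, the curve lies in $\mathcal{T}=\{k\varphi<-1\}$, and $w=-(1+k\varphi)>0$, hence $z=-\Psi^0e^{kG}>0$ and $W$ is the single-valued positive branch on $[0,\infty)$. Then the degenerate case $\Psi^0=0$: here $(1+k\varphi)\varpi\equiv0$ forces $1+k\varphi\equiv0$, i.e.\ $k\varphi\equiv-1$, a constant solution corresponding to $z=0$, $w=W(0)=0$, which is the endpoint of that same branch on $[0,\infty)$. Merging the two subcases gives the clause ``$\Psi^0\le0$'' of part~(2) and completes the proof.

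I do not expect a genuine obstacle, since the statement is essentially a packaging of the preceding discussion. The only point needing care is a consistency check: in each subcase the predicted range of $w$ and the predicted domain of $z$ must be compatible (so the branch invoked is actually defined where it is used), and the endpoints $k\varphi=0$ and $\Psi^0=0$ must be allotted to the intended branch — it is exactly this check that justifies grouping $\Psi^0=0$ with $\Psi^0<0$ in part~(2) rather than with $\Psi^0>0$ in part~(1).
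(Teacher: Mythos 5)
Your proposal is correct and follows essentially the same route as the paper's own argument: using the positivity of the integrating factor $\varpi$ to tie the sign of $1+k\varphi$ to the sign of $\Psi^0$, then reading off the branch of $W$ from the range of $w=-(1+k\varphi)$ on $\mathcal{S}_-$, $\mathcal{S}_+$ and $\mathcal{T}$, with the constant solution $k\varphi\equiv-1$ covering $\Psi^0=0$. Your explicit consistency check of the domain $z=-\Psi^0e^{kG}\in[-e^{-1},0)$ against the condition \eqref{eq:branchExact} matches the paper's discussion, so no gap remains.
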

%//////////////////// THEOREM ////////////////////%
%//////////////////// THEOREM ////////////////////%
%//////////////////// THEOREM ////////////////////%

%%%%% SUBSECTION %%%%%
\subsection{The SIRv model and its general solution}\label{subsec:SIRv}
%%%%% SUBSECTION %%%%%

Now, we relate Abel's equation (\ref{eq:AbelODE1k}) of the first kind to the SIRv model (\ref{eq:VL2ffeven2dim1}---\ref{eq:VL2ffeven2dim2}), following the approach of \cite{HLM14}.
Introduce a new variable $t=t(x)$ as an antiderivative of $\phi(x)$:
\begin{align*}
t&=
\int\phi(x)dx.
\end{align*}
Then we find
\begin{align*}
\frac{dx}{dt}
&=
\frac{1}{\DIS\frac{dt}{dx}}
=
\frac{1}{\phi}
\qquad
\mbox{and}
\\
\frac{d^2x}{dt^2}
&=
\frac{d}{dt}\frac{1}{\phi}
=
-\frac{1}{\phi^2}\frac{dx}{dt}\frac{d\phi}{dx}
=
-\frac{1}{\phi^3}\frac{d\phi}{dx}.
\end{align*}
It follows that we have
\begin{align*}
\phi
&=
\frac{1}{x^\prime}
\qquad
\mbox{and}
\qquad
\frac{d\phi}{dx}
=
-\frac{x^{\prime\prime}}{\left(x^\prime\right)^3},
\end{align*}
where we denote the derivative with respect to $t$ by ${}^\prime$ for simplicity.

Then Abel's equation \eqref{eq:AbelODE1k} of the first kind reduces to the second order ODE
\begin{align*}
x^{\prime\prime}
+
f_0\left(x^\prime\right)^3
+
f_1\left(x^\prime\right)^2
+
f_2x^\prime
+
f_3
&=
0.
\end{align*}
Dividing by $x$, we obtain
\begin{align}
\left(\log x\right)^{\prime\prime}
+
f_2\left(\log x\right)^\prime
+
\frac{f_0}{x}\left(x^\prime\right)^3
+
\frac{xf_1+1}{x^2}\left(x^\prime\right)^2
+
\frac{f_3}{x}
&=
0.
\label{eq:AbelODEvlog}
\end{align}

Let us consider the following system of first-order ODEs
\begin{align}
x^\prime
&=
-xy+ax+b,
\label{eq:SODE1}
\\
y^\prime
&=
xy+cy+d,
\label{eq:SODE2}
\end{align}
where $a,b,c,d\in\R$.
From the first equation \eqref{eq:SODE1}, we have
\begin{align*}
y
&=
a+\frac{b}{x}-\left(\log x\right)^\prime.
\end{align*}
By substituting this into the second equation \eqref{eq:SODE2}, we get
\begin{align*}
\left(\log x\right)^{\prime\prime}-c\left(\log x\right)^\prime-x\left(\log x\right)^\prime
+
\frac{b}{x^2}+a\left(x+c\right)+\frac{b}{x}\left(x+c\right)+d
&=
0.
\end{align*}
Comparing this with \eqref{eq:AbelODEvlog}, we see that, if
\begin{align*}
f_0
&=0,
\quad
f_1
=
-\frac{1}{x},
\quad
f_2
=
-x-c,
\quad
\mbox{and}
\quad
\\
f_3
&=
a(x+c)x+b\left(x+c+\frac{1}{x}\right)+dx,
\end{align*}
the system (\ref{eq:SODE1}--\ref{eq:SODE2}) of ODEs reduces to Abel's equation \eqref{eq:AbelODE1k} of the first kind, which satisfies \eqref{eq:f0ODE}.
Moreover, if $b=d=0$, the condition \eqref{eq:f1f2f3ODE} is also satisfied:
\begin{align*}
\frac{d}{dx}\log\frac{f_2}{f_3}
&=
\frac{d}{dx}\log\frac{1}{-a x}
=
-\frac{1}{x}
=
f_1,
\end{align*}
thereby (\ref{eq:SODE1}--\ref{eq:SODE2}) reduces to the exact differential equation \eqref{eq:Exact}.

Thus, we obtain the following proposition that claims a relation between Abel's equation of the first kind and the SIRv model, thereby the integrability of the SIRv model via the exact differential equation.

%//////////////////// THEOREM ////////////////////%
%//////////////////// THEOREM ////////////////////%
%//////////////////// THEOREM ////////////////////%
\begin{prp}
Let
\begin{align}
f_0
=
0,
\quad
f_1
=
-\frac{1}{x},
\quad
f_2
=
\gamma-x,
\quad
\mbox{and}
\quad
f_3
=
\nu\left(\gamma-x\right)x,
\label{eq:Abelsfs}
\end{align}
where $\gamma$ and $\nu$ are real numbers, in particular, $\nu\neq0$, since we assume $f_3\not\equiv0$.
Then, \eqref{eq:AbelODE1k} reduces to
\begin{align}
\frac{d\phi}{dx}
&=
-\frac{1}{x}\phi
+
(\gamma-x)\phi^2
+
\nu\left(\gamma-x\right)x\phi^3.
\label{eq:AbelODE1kSIR}
\end{align}
The Abel equation \eqref{eq:AbelODE1kSIR} of the first kind is equivalent to
\begin{align}
x^\prime
&=
-xy-\nu x,
\label{eq:SIRv1}
\\
y^\prime
&=
xy-\gamma y,
\label{eq:SIRv2}
\end{align} 
which is (\ref{eq:SODE1}--\ref{eq:SODE2}) with imposing
\begin{align*}
a=-\nu,
\quad
b=0,
\quad
c=-\gamma,
\quad
\mbox{and}
\quad
d=0.
\end{align*}
Moreover, (\ref{eq:SIRv1}--\ref{eq:SIRv2}) attributes to the exact differential equation \eqref{eq:Exact}.
\qed
\end{prp}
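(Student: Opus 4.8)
The plan is to verify the three claims of the proposition in the order they are stated: (1) that the substitution \eqref{eq:Abelsfs} turns \eqref{eq:AbelODE1k} into \eqref{eq:AbelODE1kSIR}, (2) that \eqref{eq:AbelODE1kSIR} is equivalent to the first-order system \eqref{eq:SIRv1}--\eqref{eq:SIRv2}, and (3) that this system reduces to the exact equation \eqref{eq:Exact}. The first claim is immediate: substitute the explicit $f_0,f_1,f_2,f_3$ from \eqref{eq:Abelsfs} into the general cubic \eqref{eq:AbelODE1k} and read off \eqref{eq:AbelODE1kSIR}. The second and third claims are not really new work either, since the general correspondence between \eqref{eq:SODE1}--\eqref{eq:SODE2} and Abel's equation \eqref{eq:AbelODE1k}, together with the conditions \eqref{eq:f0ODE} and \eqref{eq:f1f2f3ODE}, was already established in the discussion preceding the proposition; the proof amounts to checking that the specialization $a=-\nu$, $b=0$, $c=-\gamma$, $d=0$ lands exactly on \eqref{eq:Abelsfs} and satisfies the hypotheses of Lemma \ref{lmm:Abelexact}.

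Concretely, first I would substitute $f_0=0$, $f_1=-1/x$, $f_2=\gamma-x$, $f_3=\nu(\gamma-x)x$ into \eqref{eq:AbelODE1k} to obtain \eqref{eq:AbelODE1kSIR}, noting $\nu\neq0$ keeps $f_3\not\equiv0$ so this is genuinely Abel's equation of the first kind. Next I would invoke the computation already carried out above: with $b=d=0$ the system \eqref{eq:SODE1}--\eqref{eq:SODE2} becomes exactly \eqref{eq:SIRv1}--\eqref{eq:SIRv2} under $a=-\nu$, $c=-\gamma$, and its elimination of $y$ (via $y=a+b/x-(\log x)'=-\nu-(\log x)'$) yields \eqref{eq:AbelODEvlog} with $f_0=0$, $f_1=-1/x$, $f_2=-x-c=\gamma-x$, and $f_3=a(x+c)x+dx=-\nu(x-\gamma)x=\nu(\gamma-x)x$ — precisely \eqref{eq:Abelsfs}. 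Thus the system and the Abel equation \eqref{eq:AbelODE1kSIR} are equivalent via $\phi=1/x'$ and $t=\int\phi\,dx$. Finally, for the third claim I would check the two hypotheses of Lemma \ref{lmm:Abelexact}: $f_0\equiv0$ holds by construction, and $\frac{d}{dx}\log\frac{f_2}{f_3}=\frac{d}{dx}\log\frac{1}{\nu x}=-\frac{1}{x}=f_1$, so \eqref{eq:f1f2f3ODE} holds; Lemma \ref{lmm:Abelexact} then gives the reduction to \eqref{eq:Exact}.

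There is essentially no obstacle here — the proposition is a corollary that collects and specializes the constructions already built in this section. The one point requiring a small amount of care is the sign bookkeeping: making sure that $f_3=a(x+c)x+b(x+c+1/x)+dx$ from the general derivation, with $a=-\nu$, $b=d=0$, $c=-\gamma$, really equals $\nu(\gamma-x)x$ rather than its negative, and correspondingly that the $x'$-equation \eqref{eq:SIRv1} has the right sign of $\nu x$ so that the identification with the SIRv flow \eqref{eq:VL2ffeven2dim1}--\eqref{eq:VL2ffeven2dim2} (with $\gamma=\alpha$, $\nu=-\beta$) is consistent with $\alpha>0$, $\beta<0$. Since all of these are routine substitutions into formulas already displayed, the proof can be given compactly by pointing to those formulas, and I would mark it \qed accordingly.
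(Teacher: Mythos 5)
Your proposal is correct and follows essentially the same route as the paper, which itself presents this proposition as a direct specialization of the preceding derivation of $f_0,f_1,f_2,f_3$ from the system (\ref{eq:SODE1}--\ref{eq:SODE2}) together with the check that $b=d=0$ makes \eqref{eq:f1f2f3ODE} hold, so that Lemma \ref{lmm:Abelexact} applies. Your sign bookkeeping ($f_2=-x-c=\gamma-x$, $f_3=a(x+c)x=\nu(\gamma-x)x$, and $\frac{d}{dx}\log\frac{f_2}{f_3}=-\frac{1}{x}=f_1$) matches the paper's computation exactly.
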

%//////////////////// THEOREM ////////////////////%
%//////////////////// THEOREM ////////////////////%
%//////////////////// THEOREM ////////////////////%

Suppose $x$ and $y$ respectively stand for the populations of the susceptible and the infected of an infectious disease.
Also let $\gamma$ and $\nu$ respectively represent the transmission rate and the vaccination rate.
Then the system (\ref{eq:SIRv1}--\ref{eq:SIRv2}) of ODEs governs the spread of the infectious diseases under the influence of vaccination, and is called the SIRv model \cite{Hethcote74,Nobe23}.
Thus, when we call (\ref{eq:SIRv1}--\ref{eq:SIRv2}) the SIRv model, we assume $\gamma$ and $\nu$ to be positive, since they respectively stand for the rates of transmission and vaccination.

The discussion in the previous section leads to the following proposition concerning the integrability of the SIRv model on the Poisson mmanifold.

%//////////////////// THEOREM ////////////////////%
%//////////////////// THEOREM ////////////////////%
%//////////////////// THEOREM ////////////////////%
\begin{prp}\label{SIRvVL}
Let
\begin{align*}
x_0=\alpha=\gamma,
\quad
x_1=x,
\quad
x_2=\beta=-\nu,
\quad
\mbox{and}
\quad
y_1=y.
\end{align*}
Then the SIRv model (\ref{eq:SIRv1}--\ref{eq:SIRv2}) is the completely integrable Volterra lattice (\ref{eq:VL2ffeven2dim1}--\ref{eq:VL2ffeven2dim2}) with the constant boundary (\III) on the Poisson manifold $(V=\R^2,\{\ ,\ \}_2)$ possessing the Hamiltonian
\begin{align}
H_{01}(x,y)
&=
x+y-\gamma\log x+\nu\log y.
\label{eq:HamiltonianSIRv}
\end{align} 
The SIRv model (\ref{eq:SIRv1}--\ref{eq:SIRv2}) is also the Hamiltonian flow arising from the symplectic structure 
\begin{align*}
\Omega=\frac{dx\wedge dy}{xy}
\end{align*}
on the symplectic manifold $(V,\Omega)$.
\qed
\end{prp}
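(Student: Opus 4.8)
The plan is to assemble the claim directly from the results already established in Section \ref{sec:VL}, since Proposition \ref{SIRvVL} is essentially a dictionary statement: it reinterprets the two-dimensional Volterra lattice with the constant boundary (\III) as the SIRv model under the substitution $\alpha=\gamma$, $\beta=-\nu$, $x_1=x$, $y_1=y$ (with the right boundary node labelled $y_2=\beta$ in the two-field convention $a_{2k}=x_k$, $a_{2k-1}=y_k$ for $M=2$). Accordingly, the proof is a short verification followed by quoting three earlier propositions.

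First I would check the equivalence of the flows. Specializing (\ref{eq:VL2ffeven2dim1}--\ref{eq:VL2ffeven2dim2}), the $N=1$ case of the even-dimensional constant-boundary system, gives $\dot x_1=x_1(\beta-y_1)$ and $\dot y_1=y_1(x_1-\alpha)$; substituting $\beta=-\nu$, $\alpha=\gamma$, $x_1=x$, $y_1=y$ yields exactly $x^\prime=x(-\nu-y)=-xy-\nu x$ and $y^\prime=y(x-\gamma)=xy-\gamma y$, i.e.\ (\ref{eq:SIRv1}--\ref{eq:SIRv2}). This is a one-line computation; the only subtlety is the sign, namely that the right boundary value is negative, $\beta=-\nu<0$, even though $\nu>0$ in the epidemiological reading, so the boundary condition (\III) is applied with $\beta\in\R\backslash\{0\}$ but $\beta<0$ — which is exactly the regime $\alpha>0$, $\beta<0$ singled out in Section \ref{subsec:CI}.

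Next I would invoke Proposition \ref{prop:H01} and Proposition \ref{prop:Symplectic} with $N=1$. These state that (\ref{eq:VL2ffeven2dim1}--\ref{eq:VL2ffeven2dim2}) is a Hamiltonian system on $(V=\R^2,\{\cdot,\cdot\}_2)$ with Hamiltonian $H_{01}$ from \eqref{eq:H1xy}; for $N=1$ this is $H_{01}(x_1,y_1)=x_1+y_1-\alpha\log x_1-\beta\log y_1$, and substituting $\alpha=\gamma$, $\beta=-\nu$ produces precisely \eqref{eq:HamiltonianSIRv}. The symplectic assertion follows from Proposition \ref{prop:Symplectic} together with the explicit form \eqref{eq:SF2dim} of $\Omega$ with $N=1$, which reads $dx_1\wedge dy_1/(x_1y_1)=dx\wedge dy/(xy)$. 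Complete integrability is then immediate: $\dim V=2$ and $H_{01}$ is a non-constant smooth conserved quantity, so by the Liouville--Arnold criterion the single function $H_{01}$ suffices, as already observed at the start of Section \ref{subsec:CI}.

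I do not expect a genuine obstacle here; Proposition \ref{SIRvVL} is a corollary of the machinery of Section \ref{sec:VL} combined with the identification of $f_0,f_1,f_2,f_3$ carried out in Section \ref{subsec:SIRv}. The only thing that merits care is bookkeeping: confirming that the index and sign conventions in the change of variables are consistent throughout, so that the Hamiltonian $H_{01}$ and the symplectic form $\Omega$ transport correctly from the $(x_1,y_1)$-labelling to the $(x,y)$-labelling of the SIRv model; everything else is substitution.
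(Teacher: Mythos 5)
Your proposal is correct and follows essentially the same route as the paper, which states Proposition \ref{SIRvVL} with no separate proof precisely because it is the substitution $\alpha=\gamma$, $\beta=-\nu$, $x_1=x$, $y_1=y$ in (\ref{eq:VL2ffeven2dim1}--\ref{eq:VL2ffeven2dim2}) combined with Propositions \ref{prop:H01} and \ref{prop:Symplectic} at $N=1$ and the two-dimensionality argument of Section \ref{subsec:CI}. Your parenthetical remark that the right boundary node is $y_2=\beta$ in the two-field convention (rather than the $x_2$ written in the statement) and your attention to the sign $\beta=-\nu<0$ are exactly the right bookkeeping points.
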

%//////////////////// THEOREM ////////////////////%
%//////////////////// THEOREM ////////////////////%
%//////////////////// THEOREM ////////////////////%

%//////////////////// THEOREM ////////////////////%
%//////////////////// THEOREM ////////////////////%
%//////////////////// THEOREM ////////////////////%
\begin{rem}
Suppose $\nu=0$ against our assumption $f_3=\nu\left(\gamma-x\right)x\not\equiv0$.
Then (\ref{eq:SIRv1}--\ref{eq:SIRv2}) reduces to the original SIR mode.
Unfortunately, the discussion above, based on Abel's equation of the first kind, is not valid for the SIR model.
The SIR model can be investigated using the Riccati equation instead of the Abel equation \cite{Nobe23}.
\end{rem}
%//////////////////// THEOREM ////////////////////%
%//////////////////// THEOREM ////////////////////%
%//////////////////// THEOREM ////////////////////%

The SIRv model attributes to Abel's equation \eqref{eq:AbelODE1kSIR} of the first kind employing the coefficients \eqref{eq:Abelsfs}, which reduces to the exact differential equation \eqref{eq:Exact}.
Hence the SIRv model (\ref{eq:SIRv1}--\ref{eq:SIRv2}) and the Volterra lattice (\ref{eq:VL2ffeven2dim1}--\ref{eq:VL2ffeven2dim2}) are exactly solved as follows.

We have
\begin{align*}
F
&=
\exp\int f_1dx
=
\frac{C_F}{x},
\\
G
&=
\int f_2Fdx
=
C_F\left(\gamma\log x-x\right)+C_G,
\end{align*}
where $C_F$ and $C_G$ are integration constants.
Since $e^{C_G}$ acts on $\varpi$, or $\Psi$, as a constant multiplication, it does not affects the exact differential equation $d\Psi=0$.
Hence, we can assume $C_G=0$ without loss of generality.
Moreover, since we define the constant $k$ to be the integration constant
\begin{align*}
k
&=
\frac{f_2}{f_3F}
=
\frac{1}{\nu C_F},
\end{align*}
we may assume $C_F=1$ by putting $k=1/\nu$.
Thus we obtain
\begin{align*}
F
=
\frac{1}{x},
\qquad
G
=
\gamma\log x-x,
\qquad
\mbox{and}
\qquad
k
=
\frac{1}{\nu}.
\end{align*}

Then the equation \eqref{eq:ExactAbel} concerning $\varphi$ reduces to
\begin{align}
\varphi\frac{d\varphi}{dx}+\frac{\gamma-x}{x}\varphi+\nu\frac{\gamma-x}{x}=0.
\label{eq:ExactAbelSIR}
\end{align}
This attributes to the exact differential equation
\begin{align*}
d\Psi(x,\varphi)
&=
d\left[\left(1+\frac{1}{\nu}\varphi\right)\varpi(x,\varphi)\right]
=
0
\end{align*}
with employing the integrating factor given by
\begin{align*}
\varpi(x,\varphi)
&=
\exp\left[-\left(1+\frac{1}{\nu}\varphi\right)-\frac{1}{\nu}\left(\gamma\log x-x\right)\right]
=
x^{-\frac{\gamma}{\nu}}e^{\frac{x}{\nu}-\frac{\varphi}{\nu}-1}.
\end{align*}
The potential is
\begin{align}
\Psi(x,\varphi)
&=
\left(1+\frac{1}{\nu}\varphi\right)\varpi(x,\varphi)
=
-\frac{1}{\nu}\exp\left(\frac{H_{01}}{\nu}\right),
\label{eq:potensialSIRv}
\end{align}
where $H_{01}$ is the Hamiltonian \eqref{eq:HamiltonianSIRv}.
Thus the potential $\Psi$ is, of course, conserved through the evolution.
In addition, the exact differential equation $d\Psi=0$ is equivalent to Hamilton's canonical equation of motion,
\begin{align*}
dH_{01}(\p,\x)
&=
\frac{\partial H_{01}}{\partial\p} d\p+\frac{\partial H_{01}}{\partial\x} d\x
=
0,
\end{align*}
on the symplectic manifold $(V=\R^2,\Omega)$.

Remark that, for any initial values $x_0,y_0\in\R$, the initial value $\Psi^0$ of the potential is negative when $\nu>0$.
Hence, if $\nu>0$, by virtue of Proposition \ref{prop:potential}, the solution \eqref{eq:SolExact} to \eqref{eq:ExactAbelSIR} is given by the single-valued Lambert W function $W$ on $[0,\infty)$.
Explicitly, we have
\begin{align*}
\varphi(x)
&=
-\nu\left[W\left(\frac{1}{\nu}x^{\frac{\gamma}{\nu}}e^{\frac{1}{\nu}(H_{01}^0-x)}\right)+1\right],
\end{align*}
where $H_{01}^0:=H_{01}(x_0,y_0)$.
Similarly, the solution \eqref{eq:SolAbel} to Abel's equation \eqref{eq:AbelODE1kSIR} of the first kind is also provided by the single-valued $W$:
\begin{align*}
\phi(x)
=
-\frac{1}{\DIS\nu x\left[W\left(\frac{1}{\nu}x^{\frac{\gamma}{\nu}}e^{\frac{1}{\nu}(H_{01}^0-x)}\right)+1\right]}.
\end{align*}

Since the independent variable $t$ is the antiderivative of $\phi$, the solution to the SIRv model (\ref{eq:SIRv1}--\ref{eq:SIRv2}) is implicitly given by
\begin{align}
t
&=
-\int\frac{dx}{\DIS\nu x\left[W\left(\frac{1}{\nu}x^{\frac{\gamma}{\nu}}e^{\frac{1}{\nu}(H_{01}^0-x)}\right)+1\right]}.
\label{eq:tx1sol}
\end{align}
Moreover, since we have
\begin{align*}
y=-\nu-\frac{x^\prime}{x}
\end{align*}
form \eqref{eq:SIRv1}, we obtain
\begin{align}
y
&=
\nu W\left(\frac{1}{\nu}x^{\frac{\gamma}{\nu}}e^{\frac{1}{\nu}(H_{01}^0-x)}\right),
\label{eq:solSIRvy}
\end{align}
if $x$ satisfies \eqref{eq:tx1sol}.

We summarize the discussion above into the following:

%//////////////////// THEOREM ////////////////////%
%//////////////////// THEOREM ////////////////////%
%//////////////////// THEOREM ////////////////////%
\begin{thm}\label{thm:SolSIRvl}
The SIRv model (\ref{eq:SIRv1}--\ref{eq:SIRv2}) reduces to the exact differential equation $d\Psi(x,\varphi)=0$ for the potential \eqref{eq:potensialSIRv} by introducing $\varphi=(\log x)^\prime$.
%The  exact differential equation $d\Psi(x,\varphi)=0$ is equivalent to $dH_{01}(x_1,y_1)=0$ of the Volterra lattice (\ref{eq:VL2ffeven2dim1}--\ref{eq:VL2ffeven2dim2}) as the Hamiltonian flow for the Hamiltonian $H_{01}$.
The solution to the initial value problem of the SIRv model is implicitly provided by \eqref{eq:tx1sol} and \eqref{eq:solSIRvy}, where $H_{01}^0$ is the initial value of the conserved quantity.
Remark that $W$ is the single-valued Lambert W function defined on $[0,\infty)$, since we assume $\nu>0$.
\qed
\end{thm}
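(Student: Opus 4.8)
The plan is to stitch together the machinery already in place. First I would fix the variable transformation: by the preceding proposition, the SIRv model (\ref{eq:SIRv1}--\ref{eq:SIRv2}) coincides with the system (\ref{eq:SODE1}--\ref{eq:SODE2}) for $a=-\nu$, $b=d=0$, $c=-\gamma$, so---setting $\phi=1/x'$ and $t=\int\phi\,dx$ as in \S\ref{subsec:SIRv}---it is equivalent to Abel's equation \eqref{eq:AbelODE1kSIR} with coefficients \eqref{eq:Abelsfs}, which satisfy \eqref{eq:f0ODE} and \eqref{eq:f1f2f3ODE}. Introducing $\varphi$ via $\phi\varphi=F=\exp\int f_1\,dx$ gives $\varphi=F/\phi=x'/x=(\log x)'$, exactly the substitution named in the theorem, and Lemma \ref{lmm:Abelexact} then converts \eqref{eq:AbelODE1kSIR} into the exact equation \eqref{eq:Exact}. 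It remains only to evaluate the auxiliary data for the coefficients \eqref{eq:Abelsfs}: an elementary integration gives $F=1/x$, $G=\gamma\log x-x$ and $k=f_2/(f_3F)=1/\nu$, whence $\varpi=x^{-\gamma/\nu}e^{x/\nu-\varphi/\nu-1}$ and, using $1+\varphi/\nu=-y/\nu$ obtained from \eqref{eq:SIRv1}, the potential collapses to $\Psi=(1+\varphi/\nu)\varpi=-\tfrac1\nu e^{H_{01}/\nu}$ with $H_{01}$ the Hamiltonian \eqref{eq:HamiltonianSIRv}. This establishes the first assertion.

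For the solution formulas I would argue as follows. Since $\nu>0$, the initial value $\Psi^0=-\tfrac1\nu e^{H_{01}^0/\nu}$ of the potential is strictly negative for every choice of initial data $(x_0,y_0)\in\R^2$, so Proposition \ref{prop:potential}(2) forces the branch of the Lambert W function in \eqref{eq:SolExact} and \eqref{eq:SolAbel} to be the single-valued $W$ on $[0,\infty)$. Feeding $F$, $G$, $k$ and $\Psi^0$ into Corollary \ref{crl:Abelsol} yields the explicit $\phi(x)$, and since $t=\int\phi\,dx$ by the construction of $t$, integration produces the implicit relation \eqref{eq:tx1sol}. Finally, \eqref{eq:SIRv1} gives $y=-\nu-x'/x=-\nu-\varphi$; substituting $\varphi=-\nu\left[W(\cdot)+1\right]$ from \eqref{eq:SolExact} cancels the constants and leaves $y=\nu W(\cdot)$, which is \eqref{eq:solSIRvy}.

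I do not anticipate a genuine obstacle, since the theorem is essentially a repackaging of Lemma \ref{lmm:Abelexact}, Corollary \ref{crl:Abelsol} and Proposition \ref{prop:potential}; the step needing the most attention is the bookkeeping of the integration constants $C_F$ and $C_G$ arising in $F$ and $G$. One must verify that $C_G$ enters $\varpi$, hence $\Psi$, only as a multiplicative constant and so is irrelevant to $d\Psi=0$, while the ambiguity in $C_F$ is exactly absorbed by the defining relation $k=f_2/(f_3F)$, so that the normalization $C_F=1$, $C_G=0$ entails no loss of generality. The only other point worth flagging is that \eqref{eq:tx1sol} determines $x(t)$ merely implicitly---which is all that is claimed---the integrand being regular wherever $W(\cdot)+1\neq0$, i.e.\ away from the equilibria of the flow.
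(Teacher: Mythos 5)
Your proposal is correct and follows essentially the same route as the paper: reduce the SIRv model to Abel's equation \eqref{eq:AbelODE1kSIR}, apply Lemma \ref{lmm:Abelexact} with $F=1/x$, $G=\gamma\log x-x$, $k=1/\nu$ to get the potential $\Psi=-\tfrac1\nu e^{H_{01}/\nu}$, then invoke Proposition \ref{prop:potential} with $\Psi^0<0$ (since $\nu>0$) and Corollary \ref{crl:Abelsol} to obtain \eqref{eq:tx1sol} and \eqref{eq:solSIRvy} via the single-valued $W$. Your handling of the integration constants $C_F$, $C_G$ matches the paper's normalization argument, so no gaps remain.
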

%//////////////////// THEOREM ////////////////////%
%//////////////////// THEOREM ////////////////////%
%//////////////////// THEOREM ////////////////////%

For the Volterra lattice (\ref{eq:VL2ffeven2dim1}--\ref{eq:VL2ffeven2dim2}), we have 
\begin{align*}
k=-\frac{1}{\beta},
\quad
\varphi&=\frac{\dot x_1}{x_1}=\beta-y_1,
\quad
\mbox{and}
\quad
\Psi=\frac{1}{\beta}\exp\left(-\frac{H_{01}}{\beta}\right).
\end{align*}
Remark that we assume $\beta\neq0$.
If and only if $\beta<0$, $\Psi^0<0$.
Hence, according to Proposition \ref{prop:potential}, $(x_1,\varphi(x_1))=(x_1,\beta-y_1)$ is contained in $\mathcal{T}$, thereby $(x_1,y_1)$ is in
\begin{align*}
\widetilde{\mathcal{T}}
&:=
\left\{
(x_1,y_1)\ |\ 
y_1>\beta
\right\}.
\end{align*}
The solution $(x_1,y_1)$ to the Volterra lattice (\ref{eq:VL2ffeven2dim1}--\ref{eq:VL2ffeven2dim2}) is given by the single-valued Lambert W function.

Whereas, if and only if $\beta>0$, $\Psi^0>0$.
Hence, according to Proposition \ref{prop:potential}, $(x_1,\varphi(x_1))=(x_1,\beta-y_1)$ is contained in $\mathcal{S}=\mathcal{S}_-\cup\mathcal{S}_+$, thereby $(x_1,y_1)$ is in
\begin{align*}
\widetilde{\mathcal{S}}
&=
\widetilde{\mathcal{S}}_-\cup\widetilde{\mathcal{S}}_+,
\\
\widetilde{\mathcal{S}}_-
&:=
\left\{
(x_1,y_1)\ |\ 
0<y_1\leq\beta
\right\},
\\
\widetilde{\mathcal{S}}_+
&:=
\left\{
(x_1,y_1)\ |\ 
y_1>\beta
\right\}.
\end{align*}
The solution $(x_1,y_1)$ to the Volterra lattice (\ref{eq:VL2ffeven2dim1}--\ref{eq:VL2ffeven2dim2}) is given by the branch $W_0$ of the Lambert W function if $(x_1,y_1)\in\widetilde{\mathcal{S}}_-$, while by the branch $W_{-1}$ if $(x_1,y_1)\in\widetilde{\mathcal{S}}_+$.

Thus, we obtain:
%//////////////////// THEOREM ////////////////////%
%//////////////////// THEOREM ////////////////////%
%//////////////////// THEOREM ////////////////////%
\begin{thm}\label{thm:SolVL}
The general solution to the initial value problem of the Volterra lattice (\ref{eq:VL2ffeven2dim1}--\ref{eq:VL2ffeven2dim2}) is implicitly provided by
\begin{align*}
t
&=
\int\frac{dx_1}{\DIS\beta x_1\left[W\left(-\frac{1}{\beta}x_1^{-\frac{\alpha}{\beta}}e^{\frac{1}{\beta}(x_1-H_{01}^0)}\right)+1\right]},
\\
y_1
&=
-\beta W\left(-\frac{1}{\beta}x_1^{-\frac{\alpha}{\beta}}e^{\frac{1}{\beta}(x_1-H_{01}^0)}\right),
\end{align*}
where $H_{01}^0$ is the initial value of the Hamiltonian $H_{01}$.
If $\beta<0$, the function $W$ is the single-valued Lambert W function, whereas, if $\beta>0$, $W$ is the branch $W_0$ of the Lambert W function when $(x_1,y_1)\in\widetilde{\mathcal{S}}_-$, and is the branch $W_{-1}$ when $(x_1,y_1)\in\widetilde{\mathcal{S}}_+$.
\qed
\end{thm}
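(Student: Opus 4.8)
The plan is to reuse wholesale the reduction already carried out for the SIRv model in Theorem~\ref{thm:SolSIRvl}. Indeed, the Volterra lattice (\ref{eq:VL2ffeven2dim1}--\ref{eq:VL2ffeven2dim2}) is the system (\ref{eq:SODE1}--\ref{eq:SODE2}) with $a=\beta$, $b=0$, $c=-\alpha$, $d=0$; equivalently, by Proposition~\ref{SIRvVL}, it is the SIRv model (\ref{eq:SIRv1}--\ref{eq:SIRv2}) with $\gamma=\alpha$ and $\nu=-\beta$. So the whole chain ``ODE system $\to$ Abel's equation of the first kind $\to$ exact differential equation $\to$ Lambert-$W$ inversion'' applies verbatim, the one new feature being that $\nu=-\beta$ need not be positive, so the branch analysis of Proposition~\ref{prop:potential} must be invoked in both of its cases, according to the sign of $\beta$.

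First I would set $\varphi=(\log x_1)^{\prime}=\dot x_1/x_1=\beta-y_1$ and regard $x=x_1$ as the independent variable, with $\phi=1/x_1^{\prime}=1/(x_1\varphi)$ and $t=\int\phi\,dx_1$, exactly as in Section~\ref{subsec:SIRv}. Since $b=d=0$, the coefficients are $f_0=0$, $f_1=-1/x_1$, $f_2=\alpha-x_1$, $f_3=-\beta(\alpha-x_1)x_1$, which satisfy \eqref{eq:f0ODE} and \eqref{eq:f1f2f3ODE}, so by Lemma~\ref{lmm:Abelexact} the flow reduces to an exact equation $d\Psi=0$. A short computation gives $F=\exp\int f_1\,dx=1/x_1$, $G=\int f_2F\,dx=\alpha\log x_1-x_1$ and $k=f_2/(f_3F)=-1/\beta$, whence the integrating factor $\varpi=e^{-(1+k\varphi)-kG}=x_1^{\alpha/\beta}e^{\varphi/\beta-x_1/\beta-1}$ and, after inserting $\varphi=\beta-y_1$, the potential $\Psi=(1+k\varphi)\varpi=\tfrac1\beta\exp(-H_{01}/\beta)$ with $H_{01}$ as in \eqref{eq:HamiltonianSIRv}. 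Proposition~\ref{prp:Exactlsol} then yields $\varphi(x_1)=-\tfrac1k\bigl[W(-\Psi^0e^{kG})+1\bigr]=\beta\bigl[W\bigl(-\tfrac1\beta x_1^{-\alpha/\beta}e^{(x_1-H_{01}^0)/\beta}\bigr)+1\bigr]$; substituting into $y_1=\beta-\varphi$ gives the asserted formula for $y_1$, while Corollary~\ref{crl:Abelsol} together with $t=\int\phi\,dx_1=\int dx_1/(x_1\varphi)$ gives the asserted integral for $t$.

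Next I would pin down the branch of $W$. Since $\varpi>0$ always, $\mathrm{sign}(\Psi^0)=\mathrm{sign}(1/\beta)=\mathrm{sign}(\beta)$. If $\beta<0$ then $\Psi^0<0$, so Proposition~\ref{prop:potential}(2) forces $W$ to be the single-valued branch on $[0,\infty)$, with the whole orbit lying in $\mathcal{T}=\{k\varphi<-1\}$. If $\beta>0$ then $\Psi^0>0$, the orbit lies in $\mathcal{S}=\mathcal{S}_-\cup\mathcal{S}_+$, and Proposition~\ref{prop:potential}(1) prescribes $W_0$ on $\mathcal{S}_-$ and $W_{-1}$ on $\mathcal{S}_+$. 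It then remains to carry these regions to the $(x_1,y_1)$-plane: from $k\varphi=-1+y_1/\beta$, and using that $\dot y_1=y_1(x_1-\alpha)$ preserves the (positive) sign of $y_1$ along every orbit, one obtains $\mathcal{T}\leftrightarrow\widetilde{\mathcal{T}}=\{y_1>\beta\}$, $\mathcal{S}_-\leftrightarrow\widetilde{\mathcal{S}}_-=\{0<y_1\le\beta\}$ and $\mathcal{S}_+\leftrightarrow\widetilde{\mathcal{S}}_+=\{y_1>\beta\}$, which is precisely the branch rule in the statement.

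The hard part is not analytic---the reduction to an exact equation and its Lambert-$W$ inversion are already supplied by Lemma~\ref{lmm:Abelexact}, Proposition~\ref{prp:Exactlsol} and Corollary~\ref{crl:Abelsol}---but the sign bookkeeping in the branch step: one must keep careful track of the sign of $k=-1/\beta$ relative to that of $\beta$ so that the inequalities cutting out $\mathcal{S}_-$, $\mathcal{S}_+$, $\mathcal{T}$ transform correctly, and one must notice that the positivity $x_1,y_1>0$ preserved by the flow is exactly what makes the region descriptions $\widetilde{\mathcal{T}}$, $\widetilde{\mathcal{S}}_\pm$ agree with the images of $\mathcal{T}$, $\mathcal{S}_\pm$. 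Beyond that, the argument is a line-by-line transcription of the SIRv computation under $\gamma=\alpha$, $\nu=-\beta$.
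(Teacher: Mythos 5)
Your proposal is correct and follows essentially the same route as the paper: the paper's own proof of Theorem~\ref{thm:SolVL} is exactly the re-run of the SIRv/Abel/exact-equation computation with $\gamma=\alpha$, $\nu=-\beta$ (so $k=-1/\beta$, $\varphi=\beta-y_1$, $\Psi=\tfrac1\beta e^{-H_{01}/\beta}$), followed by the branch selection of Proposition~\ref{prop:potential} according to the sign of $\Psi^0$, i.e.\ of $\beta$, with the regions $\mathcal{T}$, $\mathcal{S}_\pm$ translated to $\widetilde{\mathcal{T}}$, $\widetilde{\mathcal{S}}_\pm$ via $k\varphi=-1+y_1/\beta$ and $y_1>0$. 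Your formulas and branch bookkeeping agree with the paper's.
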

%//////////////////// THEOREM ////////////////////%
%//////////////////// THEOREM ////////////////////%
%//////////////////// THEOREM ////////////////////%

%%%%% SECTION %%%%%
\section{Concluding remarks}\label{sec:concl}
We investigate the complete integrability of the Volterra lattice with imposing the constant boundary condition (\III) in terms of the Poisson structure $\{\ ,\ \}_2$ and the symplectic structure $\Omega$ on the phase space $V=\R^M$.
We then find that the Volterra lattice has the symplectic structure $\Omega$ if $M=2N$ for any boundary values $\alpha,\beta$, and hence it achieves the complete integrability if $M=2$.
Such a Volterra lattice is nothing but the SIRv model, an integrable extension of the SIR epidemic model under the influence of vaccination.
Wheres, if $M=2N-1$ and $\alpha=\beta$, the Voltera lattice has the Poisson structure $\{\ ,\ \}_2$, and hence it also admits the complete integrability if $M=3$ and $\alpha=\beta$.
While such a Volterra lattice can also be seen as an integrable extension of the SIR model, its significance as an epidemic model has not yet been revealed, as far as the author knows.

Meanwhile, upon the introduction of an appropriate variable transformation, the SIRv model is transformed into Abel's equation of the first kind, which attributes to an exact differential equation.
The potential of the exact differential equation is, of course, the conserved quantity of the SIRv model, \textit{i.e.}, the Hamiltonian $H_{01}$ of the Volterra lattice on the symplectic manifold $(\R^2,\Omega)$.
Thus, the exact differential equation is equivalent to the Hamiltonian flow on $(\R^2,\Omega)$ with the Hamiltonian $H_{01}$. 
In addition, the invariant curve of the SIRv model, or the equipotential curve of the exact differential equation, is provided by the Lambert curve. 
Thus, we implicitly obtain the general solution to the initial value problem of the SIRv model, or the Volterra lattice on $(\R^2,\Omega)$, in terms of the Lambert W function.

The integrable discretization of completely integrable systems has been extensively studied for several decades. 
Regarding SIR epidemic models, there has been enthusiastic investigation into integrable discretization, resulting in the discovery of several discrete models that possess complete integrability (see \cite{WGCR03, SWRGC04, SIN18, TM22}).
In particular, an integrable discretization of the SIRv model, which preserves the same conserved quantity as the continuous model, has been achieved through a geometric construction utilizing its invariant curve. Although the process of geometric discretization was omitted in this article due to space limitations, it has been thoroughly documented in \cite{Nobe23}. Interested readers are encouraged to refer to this article for further insight into the geometric discretization of the SIRv model.
%%%%% SECTION %%%%%

%%%%%%%%%%%%%%%%%%%%%%%%%%%%%%%%%%%%%%%%%%%%%%%%%%

\end{document}